\newif\ifieee
\newif\ifllncs
\newif\ifamsart
\newif\ifproofs
\newif\ifappendix
\newif\ifabbrev
\newif\ifpubrel
\ifllncs\usepackage{smallsubsub}\fi
\renewcommand\paragraph[1]{\par\bigskip\noindent\textbf{#1}\hspace{1ex}}
\newcounter{running}[section]
{\setcounter{running}{\value{enumi}}\end{enumerate}}
\spnewtheorem{eg}[theorem]{Example}{\bfseries}{\rmfamily}
\spnewtheorem{rmk}[theorem]{Remark}{\bfseries}{\rmfamily}
\theoremstyle{definition} 
\newtheorem{theorem}{Theorem}
\newtheorem{definition}[theorem]{Definition}
\newtheorem{lemma}[theorem]{Lemma}
\newtheorem{corollary}[theorem]{Corollary}
\newif\ifcites
\newcommand{\newop}[2]{\newcommand{#1}{\ensuremath{\mathsf{#2}}}}
\newop{\restr}{Rstr}
\newop{\exec}{Exc}
\newop{\ES}{ES}
\newop{\entry}{entry}
\newop{\exit}{exit}
\newop{\states}{states}
\newop{\initial}{initial}
\newop{\labels}{labels}
\newop{\accept}{recv}
\newop{\deliver}{xmit}
\newop{\mpair}{mpair}
\newop{\acceptep}{acc\_ep}
\newop{\deliverep}{del\_ep}
\newop{\createch}{create}
\newop{\pends}{ends}
\newop{\ipends}{init\_ends}
\newop{\LEFT}{left}
\newop{\RIGHT}{right}
\newop{\MID}{mid}
\newop{\local}{local}
\newop{\Tag}{Tag}
\newop{\sender}{sender}
\newop{\recipt}{rcpt} 
\newop{\chan}{chan} 
\newop{\chans}{chans} 
\newop{\msg}{msg} 
\newop{\fresh}{fresh} 
\newop{\events}{ev} 
\newop{\proj}{proj} 
\newop{\loc}{loc}
\newop{\Rt}{Rt}
\newop{\Rg}{Rg}
\newop{\gr}{gr}
\newop{\un}{un}
\newop{\run}{run}
\newop{\runs}{runs}
\newop{\intf}{intf}
\newop{\Host}{Host}
\newop{\traces}{traces}
\newop{\dom}{dom}
\newop{\invisible}{invis}
\newop{\visible}{vis}
\newop{\inp}{input}
\newop{\ND}{ND}
\newop{\NI}{NI}
\newop{\lab}{lab}
\newop{\select}{select}
\newop{\lsrc}{src}
\newop{\lobs}{obs}
\newop{\lcut}{cut}
\newop{\inb}{inb}
\newop{\outb}{outb}
\newop{\Dir}{Dir}
\newop{\vis}{\mathsf{vis}}
\newop{\hid}{\mathsf{hid}}
\newcommand{\bmeas}[1]{\ensuremath{\fn{\mathsf{ms}}(\rtm,#1)}}
\newcommand{\rmeas}[2]{\ensuremath{\fn{\mathsf{ms}}(#1,#2)}}
\newcommand{\corr}[1]{\ensuremath{\fn{\mathsf{cor}}(#1)}}
\newcommand{\rep}[1]{\ensuremath{\fn{\mathsf{rep}}(#1)}}
\newcommand{\ext}[3]{\ensuremath{\fn{\mathsf{ext}}(#1,#2,#3)}}
\newcommand{\qt}[2]{\ensuremath{\fn{\mathsf{qt}}(#1,#2)}}
\newcommand{\fn}[1]{\ensuremath{\operatorname{\mathit{#1}}}}
\newcommand{\rtm}{\ensuremath{\fn{\mathsf{rtm}}}}
\newcommand{\admits}[1]{\ensuremath{\mathcal{E}(#1)}}
\newcommand{\sig}[2]{[\![\,#1\,]\!]_{#2}}
\mathchardef\mhyphen="2D
\newcommand{\attstart}{\ensuremath{\mathsf{att\mhyphen start}}}
\newcommand{\downset}[1]{\ensuremath{#1\!\!\downarrow}}
\newcommand{\upset}[1]{\ensuremath{#1\!\!\uparrow}}
\newcommand{\support}[1]{\ensuremath{\fn{Supp}(#1)}}
\newcommand{\msys}{\ensuremath{\mathcal{MS}}}
\newcommand{\asys}{\ensuremath{\mathcal{AS}}}
\newcommand{\sq}{\ensuremath{\mathsf{seq}}}
\newcommand{\corrupted}{\ensuremath{\mathsf{c}}}
\newcommand{\regular}{\ensuremath{\mathsf{r}}}
\newcommand{\mval}{\ensuremath{\mathcal{MV}}}
\newcommand{\qts}{\ensuremath{\mathcal{Q}}}
\newcommand{\rst}{\ensuremath{\mathsf{rst}}}
\newtheorem{assumption}{Assumption}
\title{Principles of Layered Attestation}
\author{Paul D. Rowe\\ prowe@mitre.org}
\institute{The MITRE Corporation}
\begin{document}
\maketitle
\ifpubrel
\thispagestyle{title}
\fi

\begin{abstract}
  Systems designed with measurement and attestation in mind are often
  layered, with the lower layers measuring the layers above
  them. Attestations of such systems, which we call \emph{layered
    attestations}, must bundle together the results of a diverse set
  of application-specific measurements of various parts of the system.
  Some methods of layered attestation are more trustworthy than
  others, so it is important for system designers to understand the
  trust consequences of different system configurations. This paper
  presents a formal framework for reasoning about layered
  attestations, and provides generic reusable principles for achieving
  trustworthy results.
\end{abstract}

\section{Introduction}
\label{sec:intro}
Security decisions often rely on trust. Many computing architectures
have been designed to help establish the trustworthiness of a system
through remote attestation. They gather evidence of the integrity of a
target system and report it to a remote party who appraises the
evidence as part of a security decision. A simple example is a network
gateway that requests evidence that a target system has recently run
antivirus software before granting it access to a network. If the
virus scan indicates a potential infection, or does not offer recent
evidence, the gateway might decide to deny access, or perhaps divert
the system to a remediation network. Of course the antivirus software
itself is part of the target system, and the gateway may require
integrity evidence for the antivirus for its own security
decision. This leads to the design of layered systems in which deeper
layers are responsible for generating integrity evidence of the layers
above them.

A simple example of a layered system is one that supports ``trusted
boot'' in which a chain of boot-time integrity evidence is generated
for a trusted computing base that supports the upper layers of the
system. A more complex example might be a virtualized cloud
architecture. The virtual machines (VMs) at the top are supported at a
lower layer by a hypervisor or virtual machine monitor. Such an
architecture may be augmented with additional VMs at an intermediate
layer that are responsible for measuring the main VMs to generate
integrity evidence. These designs offer exciting possibilities for
remote attestation. They allow for specialization and diversity of the
components involved, tailoring the capabilities of measurers to their
targets of measurement, and composing them in novel ways. 

However, the resulting layered attestations are typically more complex
and challenging to analyze. Given a target system, what set of
evidence should an appraiser request? What extra guarantees are
provided if it receives integrity evidence of the measurers
themselves? Does the order in which the measurements are taken matter?
Can the appraiser tell if the correct sequence of measurements was
taken?

This paper begins to tame the complexity surrounding attestations of
these layered systems. We provide a formal model of layered
measurement and attestation systems that abstracts away the underlying
details of the measurements and focuses on the causal relationships
among component corruption, measurement, and reporting. The model
allows us to provide and justify generic, reusable strategies both for
measuring system components and reporting the resulting integrity
evidence. 

\paragraph{Limitations of measurement.}
Our starting point for this paper is the recognition of the fact that
measurement cannot \emph{prevent} corruption; at best, measurement
only \emph{detects} corruption. In particular, the runtime corruption
of a component can occur even if it is launched in a known good
state. An appraiser must therefore always be wary of the gap between
the time a component is measured and the time at which a trust
decision is made. If the gap is large then so is the risk of a
time-of-check-to-time-of-use (TOCTOU) attack in which an adversary
corrupts a component during the critical time window to undermine the
trust decision. A successful measurement strategy will limit the risk
of TOCTOU attacks by ensuring the time between a measurement and a
security decision is sufficiently small. The appraiser can then
conclude that if the measured component is currently corrupted, it
must be because the adversary performed a \emph{recent} attack. 

Shortening the time between measurement and security decision,
however, is effective only if the measurement component can be
trusted. By corrupting the measurer, an adversary can lie about the
results of measurement making a corrupted target component appear to
be in a good state. This affords the adversary a much larger window of
opportunity to corrupt the target. The corruption no longer has to
take place in the small window between measurement and security
decision because the target can already be corrupted at the time of
(purported) measurement. However, in a typical layered system design,
deeper components such as a measurer have greater protections making
it harder for an adversary to corrupt them. This suggests that to
escape the burden performing a recent corruption, an adversary should
have to pay the price of corrupting a \emph{deep} component.

\paragraph{Formal model of measurement and attestation.}
With this in mind, our first main contribution is a formal model
designed to aid in reasoning about what an adversary must do in order
to defeat a measurement and attestation strategy. Rather than forbid
the adversary from performing TOCTOU attacks in small windows or from
corrupting deep components, we consider an attestation to be
successful if the only way for the adversary to defeat its goals is to
perform such difficult tasks. Thus our model accounts for the
possibility that an adversary might corrupt (and repair) arbitrary
system components at any time. The model also features a true
concurrency execution semantics which allows us to reason more
directly about the causal effects of corruptions on the outcomes of
measurement without having to reason about unnecessary interleavings
of events. It has an added benefit of admitting a natural, graphical
representation that helps an analyst quickly understand the causal
relationships between events of an execution.

We demonstrate the utility of this formal model by validating the
effectiveness of two strategies, one for the order in which to take
measurements, the other for how to report the results in quotes from
Trusted Platform Modules (TPMs). TPM is not the only technology
available that provides a hardware root of trust for reporting. Indeed
solutions may be conceived that use other external hardware security
modules or emerging hardware support for trusted execution
environments such as Intel's SGX. However, most of the research on
attestation is based on using a TPM as the hardware root of trust for
reporting, and in this work, we follow that trend.  We formally prove
that under some assumptions about measurement and the behavior of
uncorrupted components, in order for the adversary to defeat an
attestation, he must perform some corruption which is ``difficult.''
The result is relatively concrete advice that can be applied by those
building and configuring attestation systems. By implementing our
general strategies and assumptions, layered systems can engage in more
trustworthy attestations than might otherwise result.

\paragraph{Strategy for measurement.}
An intuition manifest in much of the literature on measurement and
attestation is that trust in a system should be based on a bottom-up
chain of measurements starting with a hardware root of trust for
measurement. This is the core idea behind trusted boot processes, in
which one component in the boot sequence measures the next component
before launching it. Theorem~\ref{thm:recent or deep}, which we refer
to as the ``recent or deep'' theorem, validates this common intuition
and solidifies exactly how an adversary can defeat such bottom-up
measurement strategies. It roughly says the following:
\begin{quote} 
  If a system has measured deeper components before more shallow ones,
  then the only way for the adversary to corrupt a component $t$
  without detection is either by \emph{recently} corrupting one of
  $t$'s dependencies, or else by corrupting a component even
  \emph{deeper} in the system.
\end{quote}

\paragraph{Strategy for bundling evidence.}
Given the importance of the order of measurement, it is also important
for an attestation to reliably convey not only the outcome of
measurements, but the order in which they were taken. This point is
frequently overlooked in the literature on TPM-based
attestation. Unfortunately, the structure of TPM quotes does not
always reflect this ordering information, especially if some of the
components depositing measurement values might be dynamically
corrupted. We thus propose a particular strategy for creating a bundle
of evidence in TPM quotes designed to give evidence that measurements
were indeed taken bottom up. We show in Theorem~\ref{thm:joint
  strategy} that, under certain assumptions about the uncorrupted
measurers in the system, this strategy preserves the guarantees of
bottom-up measurement in the following sense:
\begin{quote}
  If the system satisfies certain assumptions, and the TPM quote
  formed according to our bundling strategy indicates no corruptions,
  then either the measurement were really taken bottom-up, or the
  adversary \emph{recently} corrupted one of $t$'s dependencies, or
  else the adversary corrupted an even \emph{deeper} component.
\end{quote}
Thus, any attempt the adversary makes to avoid the conditions for the
hypothesis of Theorem~\ref{thm:recent or deep} force him to validate
its conclusion nonetheless.

\paragraph{Paper structure.}
\ifabbrev%
The rest of the paper provides motivating examples and only an
overview of the relevant formal details. It is structured as
follows. Section~\ref{sec:related work} puts this work in the context
of related research from the literature. We motivate our intuitions
and informally introduce our model in Section~\ref{sec:examples}. In
Section~\ref{sec:meas:defs:results} we provide enough technical
details about the relevant definitions to formally state the
consequences of applying the intuition that it is better to measure
``bottom-up.''  In Section~\ref{sec:bundling examples}, provide
examples of how TPMs can be misused, not providing the guarantees one
might expect. We extend our model with more definitions in
Section~\ref{sec:bund-defs-res-abbrev}, providing just enough details
to state the guarantees provided by a particular strategy for using
TPMs to bundle evidence. We conclude in Section~\ref{sec:conclusion}
pointing to directions for future work.  \else%
The rest of the paper is structured as
follows. Section~\ref{sec:related work} puts this paper in the context
of related research from the literature. We motivate our intuitions
and informally introduce our model in Section~\ref{sec:examples}. We
formalize these intuitions with definitions in
Section~\ref{sec:meas:defs:results}, and also apply the formalism to
justify the intuition that it is better to measure ``bottom-up.'' In
Section~\ref{sec:bundling examples}, we discuss the basics of TPMs and
provide examples of how TPMs can be misused, not providing the
guarantees one might expect. We extend our model with more definitions
in Section~\ref{sec:bundling defs} and in Section~\ref{sec:bundling}
we demonstrate an effective strategy for using TPMs to bundle
evidence. We conclude in Section~\ref{sec:conclusion} pointing to
directions for future work.  \fi

\section{Related work.}\label{sec:related work}
There has been much research into measurement and attestation. While a
complete survey is infeasible for this paper, we mention the most
relevant highlights in order to describe how the present work fits
into the larger context of research in this area. We divide the work
into several broad categories. Although the boundaries between
the categories can be quite blurry, we believe it helps to structure
the various approaches.

\paragraph{Measurement techniques.}
Much of the early work was focused on techniques for measuring
low-level components that make up a trusted computing base
(TCB). These ideas have matured into implementations such as Trusted
Boot~\cite{TBoot}. Recognizing that many security failures
cannot be traced back to the TCB, Sailer et al.~\cite{SailerZJD04}
proposed an integrity measurement architecture (IMA) in which each
application is measured (by hashing its code) before it is
launched. More recently, there has been work trying to identify and
measure dynamic properties of system components in order to create a
more comprehensive picture of the runtime state of a
system~\cite{LoscoccoWPM07,KilSANZ09,DaviSW09,WeiPRRZ10}. All these
efforts try to establish what evidence is useful for inferring system
state relevant to security decisions. The present work takes for
granted that such special purpose measurements can be taken and that
they will accurately reflect the system state. Rather, our focus is on
developing principles for how to combine a variety of these measurers
in a layered attestation. We envision a system designer choosing the
measurement capabilities that best suit her needs and using our work
to ensure an appraiser can trust the integrity of the result.

\paragraph{Modular attestation frameworks.}
Cabuk and others~\cite{CabukCPR09} have proposed an architecture
designed to support layered platforms with hierarchical
dependencies. Their design introduces trusted software into the TCB as
a software-based root of trust for measurement (SRTM). Although they
explain how measurements by the SRTM integrate with the chain of
measurements stored in a TPM, they do not study the effect corruptions
of various components have on the outcome of
attestations. In~\cite{CokerGLHMORSSS11}, Coker et al. identify five
guiding principles for designing an architecture to support remote
attestation. They also describe the design of a (layered) virtualized
system based on these principles, although there does not appear to be
a publicly available implementation at the time of writing. Of
particular interest is a section that describes a component
responsible for managing attestations. The emphasis is on the
mechanics of selecting measurement agents by matching the evidence
they can generate to the evidence requested by an appraiser. There is
no discussion or advice regarding the relative order of measurements
or the creation of an evidence bundle to reflect the order. More
recently, modular attestation frameworks
instantiating~\cite{CokerGLHMORSSS11}'s principles have been
implemented~\cite{TNC,SAMSON,OAT}. These are integrated frameworks
that offer plug-and-play capabilities for measurement and attestation
for specific usage scenarios. It is precisely these types of systems
(in implementation or design) to which our analysis techniques would
be most useful. We have not been able to find a discussion of the
potential pitfalls of misconfiguring these complex systems. Our work
should be able to help guide the configuration of such systems and
analyze particular attestation scenarios for each architecture.

\paragraph{Attestation Protocols.}
Finally, works such
as~\cite{CokerGLHMORSSS11,DelauneKRS11,DattaFGK09,RamsdellDGR14} study
the properties of attestation protocols, typically protocols that use
a TPM to report on integrity evidence provided by measurement
agents. They tend to focus on the cryptographic protections required
to secure the evidence as it is sent over a
network. \cite{CokerGLHMORSSS11} proposes a protocol that binds the
evidence to a session key, so that an appraiser can be guaranteed that
subsequent communications will occur with the appraised system, and
not a corrupted substitute. \cite{DelauneKRS11} and
\cite{RamsdellDGR14} examine the ways in which cryptographic
protections for network events interact with the long-term state of a
TPM. None of these consider the measurement activities on the target
platform itself and how corruptions of components can affect the
outcome of the protocol. In~\cite{DattaFGK09}, Datta et al. introduce
a formalism that accounts for actions local to the target machine as
well as network events such as sending and receiving
messages. Although they give a very careful treatment of the effect of
a corrupted component on an attestation, their work differs in two key
ways. First, the formalism represents many low-level details making
their proof rather complex, sometimes obscuring the underlying
principles. Second, their framework only accounts for static
corruptions, while ours is specifically designed around the
possibility of dynamic corruption and repair of system components.


\section{Motivating Examples of Measurement}
\label{sec:examples}

Consider an enterprise that would like to ensure that systems
connecting to its network provide a fresh system scan by the most
up-to-date virus checker. The network gateway should ask systems to
perform a system scan on demand when they attempt to connect. We may
suppose the systems all have some component $A_1$ that is capable of
accurately reporting the running version of the virus checker. Because
this enterprise values high assurance, the systems also come equipped
with another component $A_2$ capable of measuring the runtime state of
the kernel. This is designed to detect any rootkits that might try to
undermine the virus checker's system scan. We may assume that $A_1$
and $A_2$ are both measured by a root of trust for measurement
($\rtm$) as part of a secure boot process.

We are thus interested in a system consisting of the following
components: $\{\fn{sys},\fn{vc},\fn{ker},A_1,A_2,\rtm\}$, where
$\fn{sys}$ represents the collective parts of the system scanned by
the virus checker $\fn{vc}$, and $\fn{ker}$ represents the
kernel. Based on the scenario described above, we may be interested in
the following set of measurement events
$$ \{\rmeas{\rtm}{A_1},\rmeas{\rtm}{A_2}, \rmeas{A_1}{\fn{vc}},
\rmeas{A_2}{\fn{ker}}, \rmeas{\fn{vc}}{\fn{sys}}\} $$
where $\rmeas{o_1}{o_2}$ represents the measurement of $o_2$ by
$o_1$. These measurement events generate the raw evidence that the
network gateway can use to make a determination as to whether or not
to admit the system to the network.

If any of the measurements indicate a problem, such as a failed system
scan, then the gateway has good reason to believe it should deny the
system access to the network. But what if all the evidence it receives
looks good? How confident can the gateway be that the version and
signature files are indeed up to date? The answer will depend on the
order in which the evidence was gathered. To get some intuition for
why this is the case, consider the three different specifications
pictured in Fig.~\ref{fig:specs} for how to order the
measurements. (The bullet after the first three events is inserted
only for visible legibility, to avoid crossing arrows.)
\begin{figure}
  \[
  \begin{array}{c|c}
  \xymatrix@R=1ex@C=.1em{
    \bmeas{A_1}\ar@{->}[dr] & \attstart(n)\ar@{->}[d] & \bmeas{A_2}\ar@{->}[dl]\\
    & \bullet\ar@{->}[dl]\ar@{->}[dr] &\\
    \rmeas{A_1}{\fn{vc}}\ar@{->}[dr] & &\rmeas{A_2}{\fn{ker}}\ar@{->}[dl]\\
    & \rmeas{\fn{vc}}{\fn{sys}} & \\
  } & 
  \xymatrix@R=1ex@C=.1em{
    \bmeas{A_1}\ar@{->}[dr] & \attstart(n)\ar@{->}[d] &\bmeas{A_2}\ar@{->}[dl]\\
    & \bullet\ar@{->}[dl]\ar@{->}[dr] &\\
    \rmeas{A_1}{\fn{vc}}\ar@{->}[d] & &\rmeas{A_2}{\fn{ker}}\\
    \rmeas{\fn{vc}}{\fn{sys}} & &\\
  }\\ & \\
  \mathrm{Specification~}S_1 & \mathrm{Specification~}S_2\\ & \\
  \hline
  \multicolumn{2}{c}{~} \\
  \multicolumn{2}{c}{
   \xymatrix@R=1ex@C=.1em{
    \bmeas{A_1}\ar@{->}[dr] & \attstart(n)\ar@{->}[d]&\bmeas{A_2}\ar@{->}[dl]\\
    & \bullet\ar@{->}[dl]\ar@{->}[dr] &\\
    \rmeas{A_1}{\fn{vc}} & &\rmeas{A_2}{\fn{ker}}\ar@{->}[d]\\
     & & \rmeas{\fn{vc}}{\fn{sys}}\\
  }}\\ \multicolumn{2}{c}{~} \\
  \multicolumn{2}{c}{\mathrm{Specification~}S_3}
  \end{array}\]
  \caption{Three orders for measurement}
  \label{fig:specs}
\end{figure}
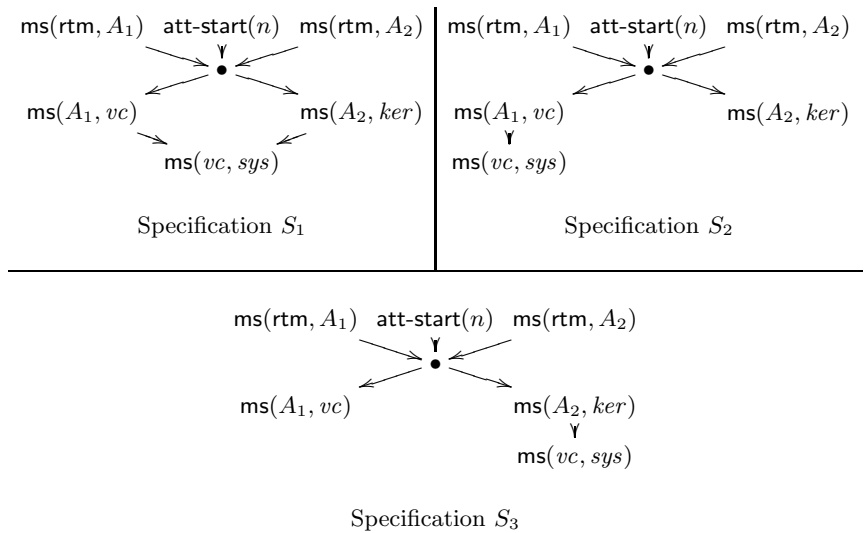

Specification~$S_1$ ensures that both $\fn{vc}$ and $\fn{ker}$ are
measured before $\fn{vc}$ runs its system scan. Specifications~$S_2$
and~$S_3$ each relax one of those ordering requirements. Let's now
consider some executions that respect the order of measurements in
each of these specifications in which the adversary manages to avoid
detection. 

Execution~$E_1$ of Fig.~\ref{fig:execs} is compatible with
Specification~$S_1$. The adversary manages to corrupt the system by
installing some user-space malware sometime in the past. If we assume
the up-to-date virus checker is capable of detecting this malware,
then the adversary must corrupt either $\fn{vc}$ or $\fn{ker}$ before
the virus scan represented by $\rmeas{\fn{vc}}{\fn{sys}}$. That is,
either a corrupted $\fn{vc}$ will lie about the results of
measurement, or else a corrupted $\fn{ker}$ can undermine the
integrity of the system scan, for example, by hiding the directory
containing the malware from $\fn{vc}$. In the case of $E_1$, the
adversary corrupts $\fn{vc}$ in order to lie about the results of the
system scan, but it does so after $\rmeas{A_1}{\fn{vc}}$ in order to
avoid detection by this measurement event.

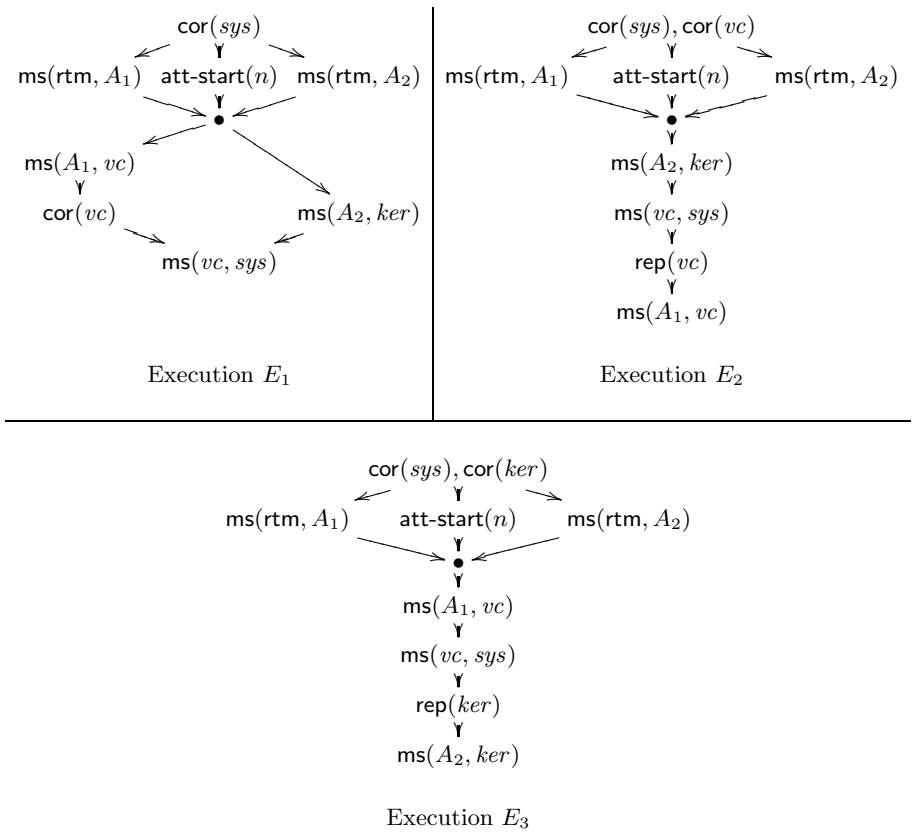
\begin{figure}
  \[
  \begin{array}{c|c}
  \xymatrix@R=1ex@C=.1em{
    & \corr{\fn{sys}}\ar@{->}[dr]\ar@{->}[dl]\ar@{->}[d] & \\
    \bmeas{A_1}\ar@{->}[dr] & \attstart(n)\ar@{->}[d] & \bmeas{A_2}\ar@{->}[dl]\\
    & \bullet\ar@{->}[dl]\ar@{->}[ddr] &\\
    \rmeas{A_1}{\fn{vc}}\ar@{->}[d]& & \\
    \corr{\fn{vc}}\ar@{->}[dr] & & \rmeas{A_2}{\fn{ker}}\ar@{->}[dl]\\
    & \rmeas{\fn{vc}}{\fn{sys}} &
  } & 
  \xymatrix@R=1ex@C=.1em{
    & \corr{\fn{sys}},\corr{\fn{vc}}\ar@{->}[dl]\ar@{->}[dr]\ar@{->}[d] & \\
    \bmeas{A_1}\ar@{->}[dr] & \attstart(n)\ar@{->}[d] & \bmeas{A_2}\ar@{->}[dl]\\
    & \bullet\ar@{->}[d] &\\
    & \rmeas{A_2}{\fn{ker}}\ar@{->}[d] &\\
    &\rmeas{\fn{vc}}{\fn{sys}}\ar@{->}[d] &\\
    & \rep{\fn{vc}}\ar@{->}[d] &\\
    & \rmeas{A_1}{\fn{vc}} &
  }\\ & \\
  \mathrm{Execution~}E_1 & \mathrm{Execution~}E_2\\ & \\
  \hline
  \multicolumn{2}{c}{~}\\
  \multicolumn{2}{c}{
  \xymatrix@R=1ex@C=.1em{
    & \corr{\fn{sys}},\corr{\fn{ker}}\ar@{->}[dl]\ar@{->}[dr]\ar@{->}[d] & \\
    \bmeas{A_1}\ar@{->}[dr] & \attstart(n)\ar@{->}[d] & \bmeas{A_2}\ar@{->}[dl]\\
    & \bullet\ar@{->}[d] &\\
    &\rmeas{A_1}{\fn{vc}}\ar@{->}[d] &\\
    &\rmeas{\fn{vc}}{\fn{sys}}\ar@{->}[d] &\\
    & \rep{\fn{ker}}\ar@{->}[d] &\\
    & \rmeas{A_2}{\fn{ker}} &
  }}\\
  \multicolumn{2}{c}{~}\\
  \multicolumn{2}{c}{\mathrm{Execution~}E_3}
  \end{array}\]
  \caption{Three system executions}
  \label{fig:execs}
\end{figure}

In Execution~$E_2$, which is consistent with Specification~$S_2$, the
adversary is capable of avoiding detection while corrupting $\fn{vc}$
much earlier. The system scan $\rmeas{\fn{vc}}{\fn{sys}}$ is again
undermined by the corrupted $\fn{vc}$. Since $\fn{vc}$ will also be
measured by $A_1$, the adversary has to restore $\fn{vc}$ to an
acceptable state before $\rmeas{A_1}{\fn{vc}}$. Execution~$E_3$ is
analagous to $E_2$, but the adversary corrupts $\fn{ker}$ instead of
$\fn{vc}$, allowing it to convince the uncorrupted $\fn{vc}$ that the
system has no malware. Since Specification~$S_3$ allows
$\rmeas{A_1}{\fn{vc}}$ to occur after the system scan, the adversary
can leverage the corrupted $\fn{vc}$ to lie about the scan results,
but must restore $\fn{vc}$ to a good state before it is measured.

Execution~$E_1$ is ostensibly harder to achieve for the adversary than
either $E_2$ or~$E_3$, because the adversary has to work quickly to
corrupt $\fn{vc}$ \emph{during} the attestation. In $E_2$ and $E_3$,
the adversary can corrupt $\fn{vc}$ and $\fn{ker}$ respectively at any
time in the past. He still must perform a quick restoration of the
corrupted component during the attestation, but there are reasons to
believe this may be easier than corrupting the component to begin
with. Is it true that all executions respecting the measurement order
of $S_1$ are harder to achieve than $E_2$ and $E_3$? What if the
adversary corrupts $\fn{vc}$ before the start of the attestation? It
would seem that he would also have to corrupt $A_1$ to avoid detection
by $A_1$'s measurement of $\fn{vc}$, $\rmeas{A_1}{\fn{vc}}$. 

One major contribution of this paper is to provide a formal framework
in which to ask and answer such questions. Within this framework we
can begin to characterize what the adversary must do in order to avoid
detection by measurement. We will show that there is a precise sense
in which Specification~$S_1$ is strictly stronger than $S_2$ or
$S_3$. This is an immediate corollary of a more general result
(Theorem~\ref{thm:recent or deep}) that validates a strong intuition
that pervades much of the literature on measurement and attestation:
Attestations are more trustworthy if the lower-level components of a
system are measured before the higher-level components. The next
section lays the groundwork for this result.

\section{Measurement Systems}
\label{sec:meas:defs:results}

\subsection{Preliminaries and Definitions}
\ifabbrev
In this section we summarize the formal definitions that underlie the
examples of the previous section. The details can be found in
Appendix~\ref{sec:msys details}. 
\else
In this section we formalize the intuitions we used for the examples
in the previous section. We start by defining measurement systems
which perform the core functions of creating evidence for
attestation. 
\fi

\ifabbrev
\paragraph{System architecture.}
A \emph{measurement system} is a collection of components (including a
root of trust for measurement, $\rtm$) that have various dependencies
among them. These include $M$, the \emph{measures} relation
designating which components can measure which others, and $C$, the
\emph{context} relation designating which components contribute to a
clean runtime context for which others. Under certain assumptions
about the acyclicity of $M$ and $C$, this allows us to define relative
depths of components as follows.
\begin{eqnarray*}
  D^1(o) & = & M^{-1}(o)\cup C^{-1}(M^{-1}(o))\\
  D^{i+1}(o)& = & D^1(D^i(o))
\end{eqnarray*}
So $D^1(o)$ consists of the measurers of $o$ and their context. As we
will see later, $D^1(o)$ represents the set of components that must be
uncompromised in order to trust the measurement of $o$.%

We can represent measurement systems pictorially as a graph whose
vertices are the objects of $\msys$ and whose edges encode the $M$ and
$C$ relations. We use the convention that $M(o_1,o_2)$ is represented
by a solid arrow from $o_1$ to $o_2$, while $C(o_1,o_2)$ is
represented by a dotted arrow from $o_1$ to $o_2$. The representation
of the system described in Section~\ref{sec:examples} is shown in
Figure~\ref{fig:systems}.%

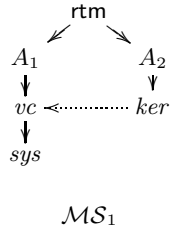
\begin{figure}
  \[
  \begin{array}{c}
   \xymatrix@R=1.5ex@C=.5em{
      & \rtm\ar@{->}[dr]\ar@{->}[dl] & \\
    A_1\ar@{->}[d] & & A_2\ar@{->}[d]\\
    \fn{vc}\ar@{->}[d] & & \fn{ker}\ar@{..>}[ll]\\
    \fn{sys} & &
   }\\ ~\\
   \msys_1
  \end{array}\]
  \caption{Visual representation of an example measurement system.}
  \label{fig:systems}
\end{figure}

\paragraph{Events, outputs, and executions.} 
As suggested by the figures of the previous section, we consider
executions to be partially ordered sets of events. These events are of
three types: \emph{measurement} events $\rmeas{o_1}{o_2}$,
\emph{adversary} events $\corr{o}$ and $\rep{o}$, and an event
$\attstart(n)$ to represent the start of an attestation in which an
appraiser provides a fresh nonce~$n$.

Measurement events produce outputs which we represent as elements of a
standard term algebra $\mathcal{T}_\Sigma(V)$. We assume, for each
$o$, a distinguished set of public terms $\mval(o)$. The appraiser
partitions the possible measurement values of $o$ into those that she
believes indicate a regular (i.e. uncompromised) component
($\mathcal{G}(o)$) and those that indicate corruption
($\mathcal{B}(o)$). For the present work, we make a key simplifying
assumption which we state informally here. (The formal statement is in
Appendix~\ref{sec:msys details}.)

\begin{assumption}[Measurement Accuracy]
The output of a measurement by regular components
is in $\mathcal{G}(o)$ if $o$ is regular at the measurement event, and
in $\mathcal{B}(o)$ if it is corrupt. 
\end{assumption}

We view this assumption as allowing us to explore the best one can
hope for with measurement. Of course, in reality, things are not so
rosy. Simple measurement schemes like hashing the code can cause
components to look corrupt when, in fact, a small change that is
irrelevant to security has changed the outcome of the
hash. Conversely, a runtime measurement scheme that only looks at a
subset of the component's data structures may fail to detect a
corruption and report a measurement value that looks acceptable.  One
could imagine relaxing this assumption by accounting for probabilities
of detection depending on which components have been corrupted. We
leave such investigations for future work with the understanding that
the results in this paper represent, in a sense, the strongest
conclusions one can expect from any measurement system.

In order to make sense of the outputs of measurements, we cannot
consider arbitrary partially ordered sets of events. Since the output
of a measurement depends on the corruption state of the target, the
measurer, and the measurer's context, we must ensure that the
corruption state of a component is well-defined. We thus introduce a
notion of a partially ordered sets of events being
\emph{adversary-ordered}, and we prove that we can unambiguously
define the corruption state of any component when the partial order is
also adversary-ordered. This explains the restriction on partial
orders found in the following definition of an execution.

\begin{definition}[Executions, Specifications]
  Let $\msys$ be a measurement system.
\begin{enumerate}
\item An \emph{execution} of $\msys$ is any finite, adversary-ordered
  poset $E$ for $\msys$.

\item A \emph{specification} for $\msys$ is any execution that
  contains no adversary events.
\end{enumerate}

Specification $S$ \emph{admits} an execution $E$ iff there is an
injective, label-preserving map of partial orders $\alpha:S\to E$. The
set of all executions admitted by $S$ is denoted $\admits{S}$.
\end{definition}

Measurement specifications are the way an appraiser might ask for
measurements to be taken in a particular order. The set $\admits{S}$
is just the set of executions in which the given events have occurred
in the desired order. The appraiser can thus analyze $\admits{S}$ in
advance to determine what an adversary has to do to avoid detection,
given that the events in $S$ were performed as specified.

The question of how an appraiser learns whether or not the actual
execution performed is in $\admits{S}$ is an important one. The second
half of the paper is dedicated to that problem. For now, we consider
what an appraiser can infer about an execution $E$ given that
$E\in\admits{S}$. 
\else
\paragraph{System architecture.}
\begin{definition}\label{def:system}
  We define a \emph{measurement system} to be a tuple
  $\msys=(O,M,C)$, where $O$ is a set of objects (e.g. software
  components) with a distinguished element $\rtm$. $M$ and $C$ are
  binary relations on $O$. We call
\begin{quote}\begin{description}
 \item[$M$] the \emph{measures} relation, and
 \item[$C$] the \emph{context} relation.
\end{description}\end{quote}
We say $M$ is \emph{rooted} when for every
  $o\in O\setminus\{\rtm\}$, $M^+(\rtm,o)$, where $M^+$ is the
  transitive closure of $M$.
\end{definition}

$M$ represents who can measure whom, so that $M(o_1,o_2)$ iff $o_1$
can measure $o_2$. $\rtm$ is the root of trust for measurement. For
this reason we henceforth always assume $M$ is rooted and $M^+$ is
acyclic (i.e. $\neg M^+(o,o)$ for any $o\in O$). This guarantees that
every object can potentially trace its measurements back to the root
of trust, and there are no measurement cycles. As a consequence,
$\rtm$ cannot be the target of measurement, i.e. for rooted, acyclic
$M$, $\neg M(o,\rtm)$ for any $o\in O$. The relation $C$ represents
the kind of dependency between $\fn{ker}$ and $\fn{vc}$ in the example
above in which one object provides a clean runtime context for
another. Thus, $C(o_1,o_2)$ iff $o_1$ contributes to maintaining a
clean runtime context for $o_2$. ($C$ stands for context.) We
henceforth always assume $C$ is transitive (i.e. if $C(o_1,o_2)$ and
$C(o_2,o_3)$ then $C(o_1,o_3)$) and acyclic. This means that no object
(transitively) relies on itself for its own clean runtime context.

Given an object $o\in O$ we define the measurers of $o$ to be
$M^{-1}(o)=\{o'\mid M(o',o)\}$. We similarly define the context for
$o$ to be $C^{-1}(o)$. We extend these definitions to sets in the
natural way.

We additionally assume $M\cup C$ is acyclic. This ensures that the
combination of the two dependency types does not allow an object to
depend on itself. Such systems are stratified, in the sense that we
can define an increasing set of dependencies as follows. 
\begin{eqnarray*}
  D^1(o) & = & M^{-1}(o)\cup C^{-1}(M^{-1}(o))\\
  D^{i+1}(o)& = & D^1(D^i(o))
\end{eqnarray*}
So $D^1(o)$ consists of the measurers of $o$ and their context. As we
will see later, $D^1(o)$ represents the set of components that must be
uncompromised in order to trust the measurement of $o$.%
\ifappendix Appendix~\ref{sec:depths} contains more examples of these
stratified dependencies $D^i$ for various measurement systems. \fi

We can represent measurement systems pictorially as a graph whose
vertices are the objects of $\msys$ and whose edges encode the $M$ and
$C$ relations. We use the convention that $M(o_1,o_2)$ is represented
by a solid arrow from $o_1$ to $o_2$, while $C(o_1,o_2)$ is
represented by a dotted arrow from $o_1$ to $o_2$. The representation
of the system described in Section~\ref{sec:examples} is shown in
Figure~\ref{fig:systems}.%
\ifappendix For examples of more measurement systems displayed in this
way, please see Appendix~\ref{sec:depths}\fi

\begin{figure}
  \[
  \begin{array}{c}
   \xymatrix@R=1.5ex@C=.5em{
      & \rtm\ar@{->}[dr]\ar@{->}[dl] & \\
    A_1\ar@{->}[d] & & A_2\ar@{->}[d]\\
    \fn{vc}\ar@{->}[d] & & \fn{ker}\ar@{..>}[ll]\\
    \fn{sys} & &
   }\\ ~\\
   \msys_1
  \end{array}\]
  \caption{Visual representation of an example measurement system.}
  \label{fig:systems}
\end{figure}

\paragraph{Terms and derivability.}
It is called a measurement system because the primary activity of
these components is to measure each other. The results of measurement
are expressed using elements of a term algebra, the crucial features
of which we present next. 

Terms are constructed from some base $V$ of atomic terms using
constructors in a signature $\Sigma$. The set of terms is denoted
$\mathcal{T}_{\Sigma}(V)$. We assume $\Sigma$ includes at least some
basic constructors such as pairing $(\cdot,\cdot)$, signing
$\sig{(\cdot)}{(\cdot)}$, and hashing $\#(\cdot)$. The set $V$ is
partitioned into public atoms $\mathcal{P}$, random nonces
$\mathcal{N}$, and private keys $\mathcal{K}$. 

Our analysis will sometimes depend on what terms an adversary can
derive (or construct). We say that term $t$ is derivable from a set of
term $T\subseteq V$ iff $t\in\mathcal{T}_{\Sigma}(T)$, and we write
$T\vdash t$. We assume the adversary knows all the public atoms
$\mathcal{P}$, and so can derive any term in
$\mathcal{T}_{\Sigma}(\mathcal{P})$ at any time. For each $o\in O$, we
assume there is a distinguished set of (public) measurement values
$\mval(o)\subset\mathcal{P}$.

\paragraph{Events, outputs, and executions.}
The components $o\in O$ and the adversary on this system perform
actions. In particular, objects can measure each other and the
adversary can corrupt and repair components in an attempt to influence
the outcome of future measurement actions. Additionally, an appraiser
has the ability to inject a random nonce $n\in\mathcal{N}$ into an
attestation in order to control the recency of events.

\begin{definition}[Events]
  Let $\msys$ be a target system. An event for $\msys$ is a node~$e$
  labeled by one of the following.
\begin{enumerate}[a.]
\item A \emph{measurement event} is labeled by $\rmeas{o_2}{o_1}$ such
  that $M(o_2,o_1)$. We say such an event \emph{measures} $o_1$, and
  we call $o_1$ the \emph{target} of~$e$. We let $\support{e}$ denote
  the set $\{o_2\}\cup C^{-1}(o_2)$.
\item An \emph{adversary event} is labeled by either $\corr{o}$ or
  $\rep{o}$ for $o\in O\setminus\{\rtm\}$.
\item The \emph{attestation start} event is labeled by
  \emph{$\attstart(n)$}, where $n$ is a term.
\end{enumerate}
When an event $e$ is labeled by $\ell$ we will write $e=\ell$. We will
often refer to the label $\ell$ as an event when no confusion will
arise. 

An event $e$ \emph{touches} $o$, iff either
  \begin{enumerate}[i.]
  \item $o$ is an argument to the label of $e$, or
  \item $o\in\support{e}$.
  \end{enumerate}
\end{definition}


The $\attstart(n)$ event will serve to bound events in time. It
represents the random choice by the appraiser of the value $n$. The
appraiser will know that anything occurring after this event can
reasonably be said to occur ``recently''. Regarding the measurement
events, the $\rtm$ is typically responsible for measuring components
at boot-time. All other measurements are load-time or runtime
measurements of one component in $O$ by another. Adversary events
represent the corruption ($\corr{\cdot}$) and repair ($\rep{\cdot}$)
of components. Notice that we have excluded $\rtm$ from corruption and
repair events. This is not because we assume the $\rtm$ to be immune
from corruption, but rather because all the trust in the system relies
on the $\rtm$: Since it roots all measurements, if it is corrupted,
none of the measurements of other components can be trusted.

As we saw in the motivational examples, an execution can be described
as a partially ordered set (poset) of these events. We choose a
partially ordered set rather than a totally ordered set because the
latter unnecessarily obscures the difference between \emph{causal}
orderings and \emph{coincidental} orderings. However, due to the
causal relationships between components, we must slightly restrict our
partially ordered sets in order to make sense of the effect that
corruption and repair events have on measurement events. To that end,
we next introduce a sensible restriction to these partial orders.

A poset is a pair $(E,\prec)$, where $E$ is any set and $\prec$ is a
transitive, acyclic relation on $E$. When no confusion arises, we
often refer to $(E,\prec)$ by its underlying set $E$ and use $\prec_E$
for its order relation. Given a poset $(E,\prec)$, let
$\downset{e}=\{e'\mid e'\prec e\}$, and $\upset{e} = \{e'\mid e\prec
e'\}$. Given a set of events $E$, we denote the set of adversary
events of $E$ by $\fn{adv}(E)$ and the set of measurement events by
$\fn{meas}(E)$.

Let $(E,\prec)$ be a partially ordered set of events for
$\msys=(O,M,C)$ and let $(E_o,\prec_o)$ be the substructure consisting
of all and only events that touch $o$. We say $(E,\prec)$ is
\emph{adversary-ordered} iff for every $o\in O$, $(E_o,\prec_o)$ has
the property that if $e$ and $e'$ are incomparable events, then
neither $e$ nor $e'$ are adversary events.


\begin{lemma}\label{lem:corruption state}
  Let $(E,\prec)$ be a finite, adversary-ordered poset for $\msys$,
  and let $(E_o,\prec_o)$ be its restriction to some $o\in O$. Then
  for any non-adversarial event $e\in E_o$, the set
  $\fn{adv}(\downset{e})$ (taken in $E_o$) is either empty or has a
  unique maximal element.
\end{lemma}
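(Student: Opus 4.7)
The plan is to extract from the definition of adversary-ordered the fact that, restricted to $E_o$, the adversary events form a chain, and then to conclude by finiteness.

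First I would unpack the hypothesis. By definition of adversary-ordered, in the substructure $(E_o,\prec_o)$ no two incomparable events can include an adversary event. Contrapositively, every adversary event in $E_o$ is comparable (under $\prec_o$) to every other event in $E_o$. In particular, any two adversary events $a_1,a_2\in\fn{adv}(E_o)$ are comparable, so $\fn{adv}(E_o)$ is totally ordered by $\prec_o$.

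Next, fix a non-adversarial event $e\in E_o$ and consider $A := \fn{adv}(\downset{e})\subseteq\fn{adv}(E_o)$, where the down-set is taken inside $E_o$. As a subset of a totally ordered set, $A$ is itself totally ordered by $\prec_o$. Since $E$ (and hence $E_o$ and $A$) is finite, if $A$ is nonempty it has a maximum element, and being totally ordered this maximum is unique. If $A$ is empty, we are in the first case of the conclusion.

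I do not expect any real obstacle here: the lemma is essentially a bookkeeping consequence of the definition of adversary-ordered, and the only subtle point is to be careful that one is restricting to $E_o$ throughout (so that incomparability is measured inside $E_o$, not in all of $E$) and that the adversary event $a\in A$ need not touch $o$ except in the sense that $a\in E_o$ already guarantees it touches $o$ by construction of the substructure. Writing the argument will amount to little more than invoking the definition and the fact that a finite chain has a unique maximum.
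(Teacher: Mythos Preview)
Your proposal is correct and follows essentially the same approach as the paper: both use the adversary-ordered hypothesis to conclude that any two adversary events in $E_o$ are $\prec_o$-comparable, and then finiteness gives a unique maximal element. Your formulation is slightly more direct in first observing that $\fn{adv}(E_o)$ is a chain, whereas the paper argues by contradiction from two distinct maximal elements, but the underlying idea is identical.
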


\begin{proof}
  Since $(E,\prec)$ is adversary-ordered, $\fn{adv}(E_o)$ is
  partitioned by $\fn{adv}(\downset{e})$ and
  $\fn{adv}(\upset{e})$. Suppose $\downset{e}$ is not empty. Then
  since $E_o$ is finite, it has at least one maximal element. Suppose
  $e'$ and $e''$ are distinct maximal elements. Thus they must be
  $\prec_o$-incomparable. However, since $(E,\prec)$ is
  adversary-ordered, either $e'\prec_o e''$ or $e''\prec_o e'$,
  yielding a contradiction. \qed
\end{proof}

\begin{definition}[Corruption state]
  Let $(E,\prec)$ be a finite, adversary-ordered poset for
  $\msys$. For each event $e\in E$ and each object $o$ the
  \emph{corruption state} of $o$ at $e$, written $\fn{cs}(e,o)$, is an
  element of $\{\bot,\regular,\corrupted\}$ and is defined as
  follows. $\fn{cs}(e,o) = \bot$ iff $e\not\in E_o$. Otherwise, we
  define $\fn{cs}(e,o)$ inductively: %
 \[\fn{cs}(e,o) =
 \left\{ \begin{array}{cl}
     \corrupted & : e=\corr{o}\\
     \regular & : e=\rep{o}\\
     \regular & : e\in\fn{meas}(E)\wedge\fn{adv}(\downset{e})\cap E_o=\emptyset\\
     \fn{cs}(e',o) & : e\in\fn{meas}(E)\wedge
     e'\mathrm{~maximal~in~}\fn{adv}(\downset{e})\cap E_o
\end{array}
\right.\] When $\fn{cs}(e,o)$ takes the value $\corrupted$ we say $o$
is \emph{corrupt} at $e$; when it takes the value $\regular$ we say
$o$ is \emph{uncorrupt} or \emph{regular} at $e$; and when it takes
the value $\bot$ we say the corruption state is \emph{undefined}.
\end{definition}

We assume measurement events produce evidence of the corruption state
of the component. The question of measurement is tricky though,
because what counts as evidence of corruption for one appraiser might
pass as evidence of regularity by another. It is the job of
measurement to produce evidence not to evaluate it. Furthermore,
evidence of regularity (or corruption) might take many forms. In our
analysis we bracket most of these questions by making a simplifying
assumption about measurements. In particular, we assume a given
appraiser can accurately determine the corruption state of a target
given that the measurement was taken by a regular component with a
regular context. More formally, we assume the following.

\begin{assumption}[Measurement Accuracy]\label{def:outputs}
  Let $\mathcal{G}(o)$ and $\mathcal{B}(o)$ be a partition for
  $\mval(o)$. Let $e=\rmeas{o_2}{o_1}$. The \emph{output} of $e$,
  written $\fn{out}(e)$, is defined as follows. $\fn{out}(e)= v\in
  \mathcal{B}(o_1)$ iff $\fn{cs}(e,o_1)=\corrupted$ and for every
  $o\in \{o_2\}\cup\{o'\mid C(o',o_2)\}$,
  $\fn{cs}(e,o)=\regular$. Otherwise
  $\fn{out}(e)=v\in\mathcal{G}(o_1)$.

  If $\fn{out}(e) \in\mathcal{B}(o_1)$ we say $e$ \emph{detects a
    corruption}. If $\fn{out}(e) \in\mathcal{G}(o_1)$ but
  $\fn{cs}(e,o_1) = \corrupted$, we say the adversary \emph{avoids
    detection at $e$}.

  If $e=\attstart(n)$, then $\fn{out}(e) = n$.
\end{assumption}

Thus, the appraiser partitions the possible measurement values of $o$
into those that she believes indicate regularity ($\mathcal{G}(o)$)
and those that indicate corruption ($\mathcal{B}(o)$).  The output of
a measurement by regular components is in $\mathcal{G}(o)$ if $o$ is
regular at the measurement event, and in $\mathcal{B}(o)$ if it is
corrupt. We view this assumption as allowing us to explore the best
one can hope for with measurement. Of course, in reality, things are
not so rosy. Simple measurement schemes like hashing the code can
cause components to look corrupt when, in fact, a small change that is
irrelevant to security has changed the outcome of the
hash. Conversely, a runtime measurement scheme that only looks at a
subset of the component's data structures may fail to detect a
corruption and report a measurement value that looks acceptable.  One
could imagine relaxing this assumption by accounting for probabilities
of detection depending on which components have been corrupted. We
leave such investigations for future work with the understanding that
the results in this paper represent, in a sense, the strongest
conclusions one can expect from any measurement system.

We can now define what it means to be an execution of a measurement
system. 

\begin{definition}[Executions, Specifications]
  Let $\msys$ be a measurement system.
\begin{enumerate}
\item An \emph{execution} of $\msys$ is any finite, adversary-ordered
  poset $E$ for $\msys$.

\item A \emph{specification} for $\msys$ is any execution that
  contains no adversary events.
\end{enumerate}

Specification $S$ \emph{admits} an execution $E$ iff there is an
injective, label-preserving map of partial orders $\alpha:S\to E$. The
set of all executions admitted by $S$ is denoted $\admits{S}$.
\end{definition}

Measurement specifications are the way an appraiser might ask for
measurements to be taken in a particular order. The set $\admits{S}$
is just the set of executions in which the given events have occurred
in the desired order. The appraiser can thus analyze $\admits{S}$ in
advance to determine what an adversary has to do to avoid detection,
given that the events in $S$ were performed as specified.

The question of how an appraiser learns whether or not the actual
execution performed is in $\admits{S}$ is an important one. The second
half of the paper is dedicated to that problem. For now, we consider
what an appraiser can infer about an execution $E$ given that
$E\in\admits{S}$. 
\fi



\subsection{A Strategy for Measurement}
\label{sec:analyses}

We now turn to a formalization of the rule of thumb at the end of
Section~\ref{sec:examples}. By ensuring that specifications have
certain structural aspects, we can conclude the executions they admit
satisfy useful constraints. In particular, it is useful to measure
components from the bottom up with respect to the dependencies of the
system. That is, if whenever $o_1$ depends on $o_2$ we measure $o_2$
before measuring $o_1$, then we can usefully narrow the range of actions
the adversary must take in order to avoid detection. For this
discussion we fix a target system $\msys$. Recall that $D^1(o)$
represents the measurers of $o$ and their runtime context.

\begin{definition}\label{def:bottom up}
  A measurement event $e=\rmeas{o_2}{o_1}$ in execution~$E$ is
  \emph{well-supported} iff either
  \begin{enumerate}[i.]
  \item $o_2= \rtm$, or
  \item\label{cond:supp} for every $o\in D^1(o_1)$, there is a
    measurement event $e'\prec_E e$ such that $o$ is the target
    of~$e'$.
  \end{enumerate}
  When $e$ is well-supported, we call the set of $e'$ from
  Condition~\ref{cond:supp} above the
  \emph{support} of $e$.  An execution~$E$ \emph{measures bottom-up}
  iff each measurement event $e\in E$ is well-supported.
\end{definition}

\begin{theorem}[Recent or deep]\label{thm:recent or deep}
  Let $E$ be an execution with well-supported measurement event
  $e=\rmeas{o_1}{o_t}$ where $o_1\ne \rtm$. Suppose that $E$ detects no
  corruptions. If the adversary avoids detection at $e$, then either
  \begin{enumerate}
  \item there exist $o\in D^1(o_t)$ and $o'\in M^{-1}(o)$ such that
    $\rmeas{o'}{o} \prec_E \corr{o} \prec_E e$
  \item there exists $o\in D^2(o_t)$ such that $\corr{o}\prec_E e$.
  \end{enumerate}
\end{theorem}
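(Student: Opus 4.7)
The plan is to unwind the hypothesis ``adversary avoids detection at $e$'' via the Measurement Accuracy assumption, then use the well-supportedness of $e$ to locate a prior measurement, and finish by a case split on the corruption state at that prior measurement. This naturally matches the two disjuncts in the conclusion: recent corruption of a dependency (conclusion 1) versus a deeper corruption (conclusion 2).

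First I would observe that since $\fn{out}(e)\in\mathcal{G}(o_t)$ but $\fn{cs}(e,o_t)=\corrupted$, the contrapositive of the ``iff'' in Measurement Accuracy forces some component $o\in\{o_1\}\cup C^{-1}(o_1)$ to have $\fn{cs}(e,o)=\corrupted$. Crucially $\{o_1\}\cup C^{-1}(o_1)\subseteq D^1(o_t)$, so $o\in D^1(o_t)$ and in particular $o\ne\rtm$. Because $e$ is well-supported and $o_1\ne\rtm$, clause~(ii) of Definition~\ref{def:bottom up} supplies a measurement event $e'=\rmeas{o'}{o}\prec_E e$ with $o'\in M^{-1}(o)$. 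Since $E$ detects no corruptions, $\fn{out}(e')\in\mathcal{G}(o)$. I now split on $\fn{cs}(e',o)$.

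In the case $\fn{cs}(e',o)=\regular$, I would produce the required $\corr{o}$ event between $e'$ and $e$. By definition of corruption state, $\fn{cs}(e,o)=\corrupted$ is witnessed by the maximal adversary event $c\in\fn{adv}(\downset{e})\cap E_o$, and $c$ is a $\corr{o}$ event. Using adversary-orderedness of $E$, $c$ is $\prec_o$-comparable to $e'$; it cannot lie below $e'$, since then either $c$ itself would be maximal in $\fn{adv}(\downset{e'})\cap E_o$ (forcing $\fn{cs}(e',o)=\corrupted$) or some later adversary event in $E_o$ below $e'$ would sit between $c$ and $e$, contradicting maximality of $c$. Hence $e'\prec_E c\prec_E e$, giving conclusion~1. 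In the case $\fn{cs}(e',o)=\corrupted$, I reapply Measurement Accuracy to $e'$: since $e'$ does not detect corruption, some $o''\in\{o'\}\cup C^{-1}(o')\subseteq D^1(o)\subseteq D^2(o_t)$ has $\fn{cs}(e',o'')=\corrupted$. That state is witnessed by a $\corr{o''}\prec_E e'\prec_E e$, yielding conclusion~2.

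The main obstacle I anticipate is the bookkeeping in the case $\fn{cs}(e',o)=\regular$: showing that the witnessing $\corr{o}$ for $\fn{cs}(e,o)=\corrupted$ is genuinely strictly above $e'$ rather than coincidentally placed. This is where adversary-orderedness (via Lemma~\ref{lem:corruption state}) earns its keep, and care is needed to rule out the possibility that a $\rep{o}$ event interposes between $c$ and $e$ in $E_o$. The remaining inclusions, such as $D^1(o)\subseteq D^2(o_t)$ for $o\in D^1(o_t)$, follow immediately from the recursive definition $D^{i+1}(o_t)=D^1(D^i(o_t))$.
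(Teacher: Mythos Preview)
Your proposal is correct and follows essentially the same approach as the paper's proof: extract a corrupt $o\in\{o_1\}\cup C^{-1}(o_1)\subseteq D^1(o_t)$ from Measurement Accuracy, use well-supportedness to find a prior measurement $e'=\rmeas{o'}{o}\prec_E e$, and case split on $\fn{cs}(e',o)$. If anything, you are more explicit than the paper in the $\fn{cs}(e',o)=\regular$ case, where the paper simply asserts that a $\corr{o}$ event must lie between $e'$ and $e$ without spelling out the adversary-orderedness argument you sketch.
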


\ifabbrev
\begin{proof}
  The proof of this theorem can be found in Appendix~\ref{sec:msys
    details}.
\end{proof}
\else
\begin{proof}
  Since the adversary avoids detection at $e$, $o_t$ is corrupt at
  $e$, and there is some $o\in \{o_1\}\cup C^{-1}(o_1)\subseteq
  D^1(o_t)$ that is also corrupt at $e$. Also, since $e$~is
  well-supported, and $o_1\ne\rtm$, we know there exists
  $e'=\rmeas{o'}{o}$ with $e'\prec_E e$. We now take cases on
  $\fn{cs}(e',o)$.

  If $\fn{cs}(e',o)=\regular$ then there must be a corruption
  $\corr{o}$ between $e'$ and $e$ satisfying Clause 1 to change its
  corruption state from $\regular$ to $\corrupted$.

  If $\fn{cs}(e',o)=\corrupted$, then since $E$ detects no
  corruptions, there must be some $o^*\in \{o'\}\cup C^{-1}(o')
  \subseteq D^2(o_t)$ such that $\fn{cs}(e',o^*)=\corrupted$. Thus
  there must be a previous corruption $\corr{o^*}\prec_E e'\prec_E e$
  satisfying Clause~2. \qed
\end{proof}
\fi

This theorem says, roughly, that if measurements indicate things are
good when they are not, then there must either be a recent corruption
or a deep corruption. This tag line of ``recent or deep'' is
particularly apt if the system dependencies also reflect the relative
difficulty for an adversary to corrupt them. By ordering the
measurements so that more robust ones are measured first, it means
that for an adversary to avoid detection for an easy compromise, he
must have compromised a measurer since it itself was measured, or
else, he must have previously (though not necessarily recently)
compromised a more robust component. In this way, the measurement of a
component can raise the bar for the adversary. If, for example, a
measurer sits in a privileged location outside of some VM containing a
target, it means that the adversary would also have to break out of
the target VM and compromise the measurer to avoid detection. The
skills and time necessary to perform such an attack are much greater
than simply compromising the end target.

\begin{figure}
  \[
  \begin{array}{c|c}
  \xymatrix@R=1ex@C=.1em{
    \bmeas{A_1}\ar@{->}[dr] & \attstart(n)\ar@{->}[d] & \bmeas{A_2}\ar@{->}[dl]\\
    & \bullet\ar@{->}[ddr]\ar@{->}[dl] &\\
    \rmeas{A_1}{\fn{vc}}\ar@{->}[d]& & \\
    \fbox{\corr{\fn{vc}}}\ar@{->}[dr]& \corr{\fn{sys}}\ar@{->}[d]& \rmeas{A_2}{\fn{ker}}\ar@{->}[dl]\\
    & \rmeas{\fn{vc}}{\fn{sys}}^* &
  }& 
  \xymatrix@R=1ex@C=.1em{
    \bmeas{A_1}\ar@{->}[dr] & \attstart(n)\ar@{->}[d] & \bmeas{A_2}\ar@{->}[dl]\\
    & \bullet\ar@{->}[ddl]\ar@{->}[dr] &\\
    & & \rmeas{A_2}{\fn{ker}}\ar@{->}[d]\\
    \rmeas{A_1}{\fn{vc}}\ar@{->}[dr] & \corr{\fn{sys}}\ar@{->}[d] & \fbox{\corr{\fn{ker}}}\ar@{->}[dl]\\
    & \rmeas{\fn{vc}}{\fn{sys}}^* &
  }\\ & \\
  E_1^1 & E_1^2\\ & \\
  \hline
  \\ & \\
  \xymatrix@R=1ex@C=.1em{
    \bmeas{A_1}\ar@{->}[d] & & \\
    \fbox{\corr{A_1}}\ar@{->}[dr] & \attstart(n)\ar@{->}[d] & \bmeas{A_2}\ar@{->}[dl] \\
    \corr{\fn{vc}}\ar@{->}[d] & \bullet\ar@{->}[dl]\ar@{->}[dr] &\\
    \rmeas{A_1}{\fn{vc}}^*\ar@{->}[dr] & \corr{\fn{sys}}\ar@{->}[d]&\rmeas{A_2}{\fn{ker}}\ar@{->}[dl]\\
    & \rmeas{\fn{vc}}{\fn{sys}}^* &
  } &
  \xymatrix@R=1ex@C=.1em{
     & & \bmeas{A_2}\ar@{->}[d]\\
    \bmeas{A_1}\ar@{->}[dr]& \attstart(n)\ar@{->}[d] & \fbox{\corr{A_2}}\ar@{->}[dl]\\
    & \bullet\ar@{->}[dl]\ar@{->}[dr] & \corr{\fn{ker}}\ar@{->}[d]\\
    \rmeas{A_1}{\fn{vc}}\ar@{->}[dr] & \corr{\fn{sys}}\ar@{->}[d]&\rmeas{A_2}{\fn{ker}}^*\ar@{->}[dl]\\
    & \rmeas{\fn{vc}}{\fn{sys}}^* &
  }\\ & \\
  E_1^3 & E_1^4
 \end{array}\]
  \caption{Executions that do not detect corruption of $\fn{sys}$.}
  \label{fig:outcomes}
\end{figure}
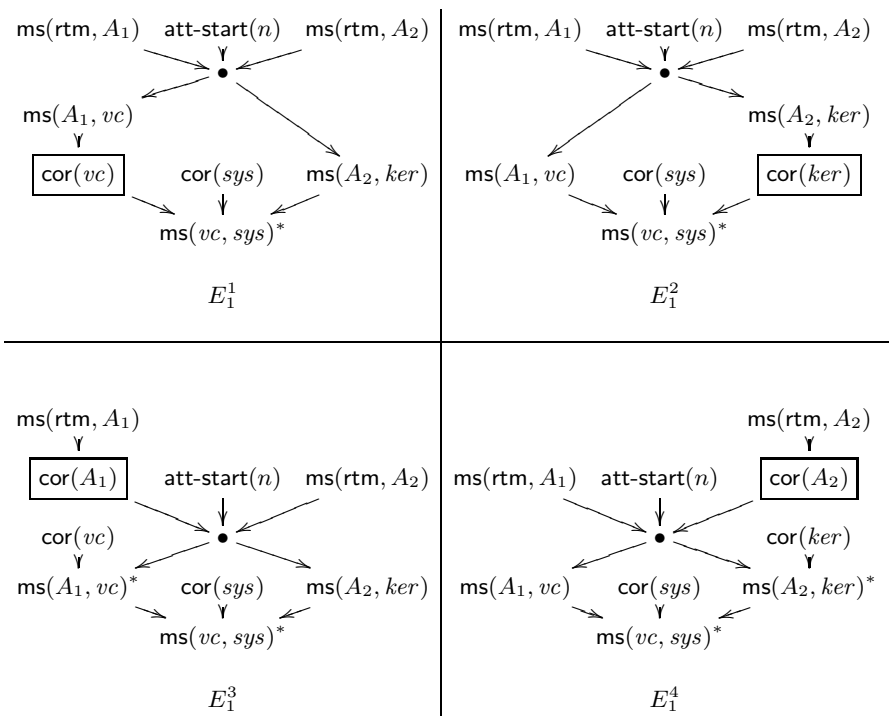

Let's illustrate this result in the context of the example of
Section~\ref{sec:examples}. The specification $S_1$ satisfies the main
hypothesis of Theorem~\ref{thm:recent or deep}. Execution $E_1$
illustrates an example of the first clause of the conclusion being
satisfied. There is a ``recent'' corruption of $\fn{vc}$ in the sense
that $\fn{vc}$ is corrupted after it is measured. Since the
measurement of $\fn{vc}$ occurs after the start of the attestation,
this is truly recent, in that the adversary has very little time to
work. The appraiser can control this by ensuring that attestations
time out after some fixed amount of time.

Theorem~\ref{thm:recent or deep} also indicates other possible
executions in which the adversary can undetectably corrupt
$\fn{sys}$. There could be a recent corruption of $\fn{vc}$, or else
there could be some previous corruption of either $A_1$ or $A_2$. All
the various options are shown in Figure~\ref{fig:outcomes} in which
the measurement events at which the adversary avoids detection are
marked with an asterisk, and the corruption events guaranteed by the
theorem are boxed. Our theorem allows us to know that these
executions essentially characterize all the cases in which a corrupted
$\fn{sys}$ goes undetected.




\section{Motivating Examples of Bundling}
\label{sec:bundling examples}

The previous section discusses how to constrain adversary behavior
using the order of measurements. However, implicit in the analysis is
the assumption that an appraiser is able to verify the order and
outcome of the measurement events. Since a remote appraiser cannot
directly observe the target system, this assumption must be discharged
in some way. A measurement system must be augmented with the ability
to record and report the outcome and order of measurement events. We
refer to these additional activities as \emph{bundling} evidence. Our
focus for this paper is on using the Trusted Platform Module for this
purpose. While there are techniques and technologies that can be used
as roots of trust for reporting (e.g. hardware-based trusted execution
environments such as Intel's SGX) there has been a lot of research
into TPMs and their use for attestation. Much of that work does not
pay close attention to the importance of faithfully reporting the
order in which measurements have taken place. Thus, we believe that
studying TPM-based attestation is a fruitful place to start, and we
leave investigations of other techniques and technologies for future
work. 

\ifabbrev%
\paragraph{TPMs, PCRs, and quotes.}
 In our presentation, we assume the reader has a basic familiarity
 with a few key features of Trusted Platform Modules (TPMs). In
 particular, we assume a familiarity with Platform Configuration
 Registers (PCRs) and how their contents can change through the
 \emph{extend} command. We also assume familiarity with how TPM quotes
 work. We direct the reader needing more background to
 Appendix~\ref{sec:abbrev tpms} for more information. 

\else%
\subsection{TPM Background}
Trusted Platform Modules (TPMs) are small hardware processors that are
designed to provide secure crypto processing and limited storage of
information isolated from software. Its technical specification was
written by the Trusted Computing Group (TCG)~\cite{TPM}. While TPMs
have many features designed to support subtle properties, we only
briefly review those features relevant for our purposes.

TPMs have a bank of isolated storage locations called Platform
Configuration Registers (PCRs) designed to store measurements of a
platform's state. These PCRs have a very limited interface. They start
in some well-known state and each PCR can only be updated by
\emph{extending} a new value $v$ which has the effect of updating the
contents of the PCR to be the hash of $v$ with the previous
contents. Thus the contents of each PCR serve as a historical record
of all the measurements extended into them since the most recent
system boot.

TPMs also have the ability to securely report the values in their PCRs
by creating a digital signature over their contents using a private
key that is only accessible inside the TPM. This operation is known as
a \emph{quote}. Since the PCRs are isolated from software, any remote
party that has access to the corresponding public key can verify the
contents of the PCRs. In order to protect against replay attacks and
ensure the recency of the information, TPM quotes also sign some
externally provided data, typically a random nonce chosen by an
appraiser.

Finally, TPMs have a limited form of access control for their PCRs
known as \emph{locality}. Some PCRs may only be extended by particular
privileged components. Thus if a PCR with access control enabled
contains some sequence of measurements, it must have been (one of) the
privileged component(s) that extended those values. Currently TPMs
have five localities so that they can differentiate between five
groups of components. 

Currently TPMs are widely available in commodity computers although
the surrounding architectures are such that they are rarely easy to
access and use. There has been some research into ``virtualizing''
TPMs. This entails providing robust protections for a software TPM
emulator that ensure it can achieve comparable levels of isolation
among other properties. Such a technology would be particularly useful
in virtualized cloud environments where one would like to provide the
benefits of a TPM to virtual machines that may be instantiated on
different physical hardware. Virtual TPMs (vTPMs) are currently
unavailable, however the TCG is currently producing a specification
that details the necessary protections, and there are some preliminary
implementations that will likely be modified as the details of the
specification become more clear. 

vTPMs provide two additional benefits over hardware TPMs (assuming the
necessary protections are guaranteed) that we will take advantage of
here. While hardware TPMs typically only have 24 PCRs, there is
essentially no limit on the number of PCRs a vTPM might
have. Furthermore, vTPMs would be able to implement many more than
five localities. These two features combine to allow many components
to each have dedicated access to their own PCRs. As we will see, this
is advantageous. However, given the current state of the technology,
assuming these features exist is ``forward thinking.'' The distinction
between hardware TPMs and vTPMs will not affect the core of our
analysis, so we henceforth use TPM without specifying if it is a
hardware TPM or vTPM.

\paragraph{PCR values and quotes.}\fi%
We represent both the values stored in PCRs and the quotes as terms in
$\mathcal{T}_{\Sigma}(V)$. Since PCRs can only be updated by extending
new values, their contents form a hash chain
$\#(v_n,\#(...,\#(v_1,\rst)))$. We abbreviate such a hash chain as
$\sq(v_1,\dots,v_n)$. So for example, $\sq(v_1,v_2) =
\#(v_2,\#(v_1,\rst))$. We say a hash chain $\sq(v_1,\dots,v_n)$
\emph{contains} $v_i$ for each $i\leq n$. Thus the contents of a PCR
contain exactly those values that have been extended into it. We also
say $v_i$ is \emph{contained before} $v_j$ in $\sq(v_1,\dots,v_n)$
when $i<j\leq n$. That is, $v_i$ is contained before $v_j$ in the
contents of $p$ exactly when $v_i$ was extended before $v_j$.

A quote from TPM $t$ is a term of the form
$\sig{n,(p_i)_{i\in I},(v_i)_{i\in I}}{\fn{sk}(t)}$. It is a
signature over a nonce $n$, a list of PCRs $(p_i)_{i\in I}$ and their
respective contents $(v_i)_{i\in I}$ using $\fn{sk}(t)$, the secret
key of $t$. We always assume $\fn{sk}(t)\in \mathcal{K}$ the set of
non-public, atomic keys. That means the adversary does not know
$\fn{sk}(t)$ and hence cannot forge quotes.

\subsection{Pitfalls of TPM-Based Bundling.}
The two key features of TPMs (protected storage and secure reporting)
allow components to store the results of their measurements and later
report the results to a remote appraiser. The resulting quote (or set
of quotes) is a bundle of evidence that the appraiser must use to
evaluate the state of the target system. Indeed, this bundle is the
only evidence the appraiser receives. In the rest of this section we
present various examples that demonstrate how the structure of this
bundle affects the trust inferences a remote appraiser is justified in
making about the target. 

Consider $\msys_1$ found in Section~\ref{sec:examples}, and pictured
in Fig.~\ref{fig:systems}. Ideally a remote appraiser would be able to
verify that an execution that produces a particular set of quotes
$\qts$ is in $\admits{S_1}$ (from Fig.~\ref{fig:specs}). The appraiser
must be able to do this on the basis of $\qts$ only. The possibilities
for $\qts$ depend somewhat on how $\msys_1$ is divided. For example,
if $\msys_1$ is a virtualized system, $\rtm$ might sit in an
administrative VM, and $A_1$ and $A_2$ could be in a privileged
``helper'' VM separated from the main VM that hosts
$\fn{ker},\fn{vc},$ and $\fn{sys}$. If each of these VMs is supported
by its own TPM, then $\qts$ would have to contain at least three
quotes just to convey the raw measurement evidence. However, if
$\msys_1$ is not virtualized, they might all share the same TPM and a
single quote might suffice. For our purposes it suffices to consider a
simple architecture in which all the components share a single TPM.

\paragraph{Strategy 1: A single hash chain.}
Since PCRs contain an ordered history of the extended values, the
first natural idea is for all the components to share a PCR $p$, each
extending their measurements into $p$. The intuition is that the
contents of $p$ should represent the order in which the measurements
occurred on the system. To make this more concrete, assume the
measurement events of~$S_1$ have the following output:
$\fn{out}(\rmeas{\rtm}{A_1})=v_1, \fn{out}(\rmeas{\rtm}{A_2})=v_2,
\fn{out}(\rmeas{A_1}{\fn{vc}})=v_3,
\fn{out}(\rmeas{A_2}{\fn{ker}})=v_4,
\fn{out}(\rmeas{\fn{vc}}{\fn{ker}})=v_5$. Then this strategy would
produce a single quote $Q=
\sig{n,p,\sq(v_1,v_2,v_3,v_4,v_5)}{\fn{sk}(\fn{t})}$. To satisfy the
order of $S_1$, any linearization of the measurements would do, so the
appraiser should also be willing to accept  $Q'=
\sig{n,p,\sq(v_2,v_1,v_3,v_4,v_5)}{\fn{sk}(\fn{t})}$ in which $v_1$
and $v_2$ were generated in the reverse order.


Figure~\ref{fig:shared pcr} depicts an execution that produces the
expected quote~$Q$, but does not satisfy the desired order. Since all
the measurement components have access to the same PCR, if any of
those components is corrupted, it can extend values to make it look as
though other measurements were taken although they were not. This is
particularly troublesome when a relatively exposed component like
$\fn{vc}$ can impersonate the lower-level components that measure
it. 

\begin{figure}
\[\begin{array}{c}
  \xymatrix@R=1ex@C=.1em{
    \corr{\fn{sys}}\ar@{->}[dr] & & \corr{\fn{vc}}\ar@{->}[dl]\\
    & \attstart(n)\ar@{->}[d] & \\
    & \ext{\fn{vc}}{p}{v_1}\ar@{->}[d] &\\
    & \ext{\fn{vc}}{p}{v_2}\ar@{->}[d] &\\
    & \ext{\fn{vc}}{p}{v_3}\ar@{->}[d] &\\
    & \ext{\fn{vc}}{p}{v_4}\ar@{->}[d] &\\
    & \ext{\fn{vc}}{p}{v_5}\ar@{->}[d] &\\
    & \qt{n}{p}=Q &
  }\\~\\
\mathrm{Output~of~quote~is~}Q=
\sig{n,p,\sq(v_1,v_2,v_3,v_4,v_5)}{\fn{sk}(t)}.
\end{array}\]
  \caption{Defeating Strategy 1}
  \label{fig:shared pcr}
\end{figure}

This motivates our desire to have strict access control for PCRs. This
would allow the appraiser to correctly infer which component has
provided each piece of evidence. The locality feature of TPMs could be
used for this purpose. Given the limitations of locality in the
current technology, however, it may be necessary to introduce another
component that is responsible for disambiguating the source of each
measurement into a PCR. Such a strategy would require careful
consideration of the effect of a corruption of that component, and to
include measurement evidence that it is functioning properly. For
simplicity of our main analysis we freely take advantage of the
assumption that TPMs can provided dedicated access to one PCR per
component of the system it supports, leaving an analysis of the more
complicated architecture for a more complete treatment of the subject.

\paragraph{Strategy 2: Separate hash chains.}
A natural next attempt given this assumption would be to produce a
single quote over the set of PCRs that contain the measurement
evidence. This would produce quotes with the structure $Q=
\sig{n,(p_r,p_1,p_2,p_{vc}),(s_1,s_2,s_3,s_4)}{\fn{sk}(t)}$, in which
$s_1=\sq(v_1,v_2), s_2=\sq(v_3), s_3=\sq(v_4),
s_4=\sq(v_5)$. Figure~\ref{fig:no chain} demonstrates a failure of
this strategy. The problem, of course, is that, since the PCRs may be
extended concurrently, the relative order of events is not captured by
the structure of the quote.

\begin{figure}
\[\begin{array}{c}
  \xymatrix@R=1ex@C=.1em{
    \rmeas{\rtm}{A_1}\ar@{->}[d] & \rmeas{\rtm}{A_2}\ar@{->}[d] &
    \rmeas{A_1}{\fn{vc}}\ar@{->}[d] & \rmeas{A_2}{\fn{ker}}\ar@{->}[d]
    & \rmeas{\fn{vc}}{\fn{sys}}\ar@{->}[d]\\
    \ext{\rtm}{p_r}{v_1}\ar@{->}[drr] &
    \ext{\rtm}{p_r}{v_2}\ar@{->}[dr] & \ext{A_1}{p_1}{v_3}\ar@{->}[d]
    & \ext{A_2}{p_2}{v_4}\ar@{->}[dl] &
    \ext{\fn{vc}}{p_{vc}}{v_5}\ar@{->}[dll]\\
    & & \attstart(n)\ar@{->}[d] & &\\
    & & \qt{n}{(p_i)_{i\in I}}=Q & &
  }\\~\\
\mathrm{Output~of~quote~is~}Q=
\sig{n,(p_r,p_1,p_2,p_{vc}),(s_1,s_2,s_3,s_4)}{\fn{sk}(t)}\\
s_1=\sq(v_1,v_2),s_2=\sq(v_3),s_3=\sq(v_4),s_4=\sq(v_5)).
\end{array}\]
  \caption{Defeating Strategy 2}
  \label{fig:no chain}
\end{figure}

\paragraph{Strategy 3: Tiered, nested quotes.}
We thus require a way to re-introduce evidence about the order of
events while maintaining the strict access control on PCRs. That is,
we should incorporate measurement evidence from lower layers before
generating the evidence for higher layers. This suggests a tiered and
nested strategy for bundling the evidence. In the case of $\msys_1$,
to demonstrate the order specified in $S_1$, our strategy might
produce a collection of quotes of the following form.%
\begin{eqnarray*}
Q_1&=&\sig{n,p_r,\sq(v_1,v_2)}{\fn{sk}(t)}\\
Q_2&=&\sig{n,(p_1,p_2),(\sq(Q_1,v_3),\sq(Q_1,v_4))}{\fn{sk}(t)}\\
Q_3&=&\sig{n,p_{vc},\sq(Q_2,v_5)}{\fn{sk}(t)}
\end{eqnarray*}
The quote $Q_1$ provides evidence that $\rtm$ has measured $A_1$ and
$A_2$. This quote is itself extended into the PCRs of $A_1$ and $A_2$
before they take their measurements and extend the results. $Q_2$
thus represents evidence that $\rtm$ took its measurements before
$A_1$ and $A_2$ took theirs. Similarly, $Q_3$ is evidence that
$\fn{vc}$ took its measurement after $A_1$ and $A_2$ took theirs since
$Q_2$ is extended into $p_{\fn{vc}}$ before the measurement evidence. 

Unfortunately, this quote structure is not quite enough to ensure that
the proper order is respected. Figure~\ref{fig:non-atomic} illustrates
the problem. In that execution, all the measurements are generated
concurrently at the beginning, and each component waits to extend the
result until it gets the quote from the layer below.  The quotes give
accurate evidence for the order in which evidence was \emph{recorded}
but not for the order in which the evidence was generated. It must be
the job of regular components to ensure that the order of extend
events accurately reflects the order of measurement events. We make
precise our assumptions for regular components in~%
\ifabbrev%
Section~\ref{sec:bund-defs-res-abbrev}.
\else%
Section~\ref{sec:bundling}.
\fi%
Under those extra assumptions we can prove
that a quote generated according to this final strategy is sufficient
to ensure that the execution it came from meets the guarantees of
Theorem~\ref{thm:recent or deep}. 

\begin{figure}
  \[\begin{array}{c}
  \xymatrix@R=1ex@C=.1em{
    \rmeas{\rtm}{A_1}\ar@{->}[drr] & \rmeas{\rtm}{A_2}\ar@{->}[dr] &
    \rmeas{A_1}{\fn{vc}}\ar@{->}[d] & \rmeas{A_2}{\fn{ker}}\ar@{->}[dl]
    & \rmeas{\fn{vc}}{\fn{sys}}\ar@{->}[dll]\\
    & & \attstart(n)\ar@{->}[d] & &\\
    & & \ext{\rtm}{p_r}{v_1}\ar@{->}[d] & &\\
    & & \ext{\rtm}{p_r}{v_2}\ar@{->}[d] & &\\
    & & \qt{n}{p_r}=Q_1\ar@{->}[dl]\ar@{->}[dr] & &\\
    & \ext{A_1}{p_1}{Q_1}\ar@{->}[d] & &
    \ext{A_2}{p_2}{Q_1}\ar@{->}[d] &\\
    & \ext{A_1}{p_1}{v_3}\ar@{->}[dr] & &
    \ext{A_2}{p_2}{v_4}\ar@{->}[dl] &\\
    & & \qt{n}{(p_1,p_2)}=Q_2\ar@{->}[d] & &\\
    & & \ext{\fn{vc}}{p_{vc}}{Q_2}\ar@{->}[d] & &\\
    & & \ext{\fn{vc}}{p_{vc}}{v_5}\ar@{->}[d] & &\\
    & & \qt{n}{p_{vc}}=Q_3 & &
  }\\~\\
\mathrm{Outputs~of~quotes~are~}Q_1=\sig{n,p_r,\sq(v_1,v_2)}{\fn{sk}(t)},\\
Q_2=\sig{n,(p_1,p_2),(\sq(Q_1,v_3),\sq(Q_1,v_4))}{\fn{sk}(t)},\\
Q_3=\sig{n,p_{vc},\sq(Q_2,v_5)}{\fn{sk}(t)}.
\end{array}\]
  \caption{Defeating Strategy 3}
  \label{fig:non-atomic}
\end{figure}


\ifabbrev%
\section{Attestation Systems and Bundling Evidence}
\label{sec:bund-defs-res-abbrev}
\else%
\section{Attestation Systems}
\label{sec:bundling defs}
\fi
\ifabbrev%
In this section we summarize the additional definitions necessary to
capture the relevant aspects of TPMs and their use in layered
attestations. The detailed definitions can be found in
Appendix~\ref{sec:attsys details}.
\else%
In this section we augment the earlier definitions for measurement
systems to account for the use of TPMs to record and report on the
evidence generated by measurement. The following definitions closely
parallel those of Section~\ref{sec:meas:defs:results}. We begin by
expanding a measurement system into an attestation system.
\fi

\ifabbrev%
\subsection{Extended Definitions}
We first define an \emph{attestation system} to be an augmentation of
a measurement system that additionally specifies a set of PCRs $P$ and
an assignment $L$ of components to PCRs designating the PCRs into
which each component is able to extend. We similarly augment the set
of events to include \emph{extend} events $\ext{o}{v}{p}$ in which
component $o$ extends value $v$ into PCR $p$, and \emph{quote} events
$\qt{v}{p_I}$ in which a TPM produces a signature over some input
value $v$ and the contents of PCRs $p_I$ for some index set $I$.

In order to understand the causal effects of extending values and
producing quotes, we must be able to track the contents of any PCR
throughout an execution. To that end, we introduce the notion of
\emph{extend-ordered} partially ordered sets of events which allows us
to unambiguously define the contents of PCRs at relevant events, and
require that executions also be extend-ordered. Thus, in executions,
we can define the output of a quote event $\qt{v}{p_I}$ to be the
digital signature $\sig{v,(p_i)_{i\in I},(v_i)_{i\in I}}{\fn{sk}(t)}$
where $v_i$ is the contents of $p_i$ at the quote event, and
$\fn{sk}(t)$ is the uncompromised signature key of the TPM identified
by~$t$. For any execution $E$ that has a quote event with output $Q$,
we say that $E$ \emph{produces} quote $Q$ and write $E\in\admits{Q}$.
\else%
\begin{definition}\label{def:system}
  We define an \emph{attestation system} to be $\asys=(O,M,C,P,L)$
  where $\msys = (O,M,C)$ is a measurement system, $P=T\times R$ for
  some set $T$ of TPMs and some index set $R$ of their PCR registers,
  and $L$ is a relation on $O \times P$.
\end{definition}

Elements of $P$ have the form $p=t.i$ for $t\in T$ and $i\in R$.  The
relation $L$ represents the access control constraints for extending
values into TPM PCRs. Each component in $O$ can only access a single
TPM, so we assume that if $L(o,t.i)$ and $L(o,t'.i')$, then $t=t'$. As
we discussed in the previous section, it is advantageous to assume the
access control mechanism dedicates a PCR to each component that needs
one. We formalize this by assuming $L$ is injective in the sense that
if $L(o,p)$ and $L(o',p)$ then $o=o'$.

The extra structure of an attestation system over a measurement system
allows us to formalize the activities of recording and reporting
evidence using events for extending values into PCRs and quoting the
results. 

\begin{definition}[Events]
  Let $\asys$ be an attestation system. An event is either an event of
  the included measurement system or it is a node labeled by one of
  the following.
\begin{enumerate}[a.]
\item An \emph{extend event} is labeled by $\ext{o}{v}{p}$, such that
  $L(o,p)$ and $v$ is a term.
\item A \emph{quote event} is labeled by $\qt{v}{t_I}$, where $v$ is a
  term, and $t_I=\{t.i\mid i\in I\}$ is a sequence of PCRs belonging
  to the same TPM $t$. We say a quote event \emph{reports on} $p$, or
  \emph{is over} $p$, if $p\in t_I$.
\end{enumerate}
The second argument to extend events and the first argument to quote
events is called the \emph{input}.

An event $e$ \emph{touches} PCR $p$, iff either
  \begin{enumerate}[i.]
  \item $e=\ext{o}{v}{p}$ for some $o$ and $v$, or
  \item $e=\qt{v}{t_I}$ for some $v$ and $p\in t_I$.
  \end{enumerate}
\end{definition}

Notice that a quote event has no corresponding component $o\in
O$. This is because TPMs may produce quotes in response to a request
by any component that has access to it. 

Just as with measurement systems, we must impose some constraints on
the partially ordered sets of these events if we expect the result of
quote and extend events to accurately represent the effects of prior
extend events. The following restriction is completely analogous to
our definition of adversary-ordered sets of events, this time focusing
on the state changes of PCRs. 

Recall that for $e\in (E,\prec)$, $\downset{e}$ is the set of events
preceding $e$ in $E$, and $\upset{e}$ is the set of events occurring
after $e$ in $E$. Let $\fn{ext}(E)$ denote the set of extend events of
$E$ and $\fn{qt}(E)$ denote the set of quote events of $E$. 

Let $(E,\prec)$ be a partially ordered set of events for
$\asys=(O,M,C,P,L)$ and let $(E_p,\prec_p)$ be the substructure
consisting of all and only events that touch PCR $p$. We say
$(E,\prec)$ is \emph{extend-ordered} iff for every $p\in P$,
$(E_p,\prec_p)$ has the property that if $e$ and $e'$ are incomparable
events, then they are both quote events.

\begin{lemma}
  Let $(E,\prec)$ be a finite extend-ordered poset for $\asys$, and
  let $(E_p,\prec_p)$ be its restriction to some $p\in P$.
  Then for every event $e\in E_p$, $\fn{ext}(\downset{e})$ is either
  empty, or it has a unique maximal event $e'$.
\end{lemma}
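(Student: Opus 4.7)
The plan is to mirror almost exactly the argument used for the corruption-state lemma (for adversary-ordered posets), simply swapping the roles: adversary events become extend events, and the ``either comparable, or both non-adversarial'' clause becomes ``either comparable, or both quote events.''

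First I would fix an event $e \in E_p$ and handle the trivial case: if $\fn{ext}(\downset{e}) = \emptyset$ the conclusion holds vacuously. Otherwise, I would observe that since $E$ is finite, so is its restriction $E_p$, and hence $\fn{ext}(\downset{e})$ is a non-empty finite sub-poset of $(E_p, \prec_p)$. A finite non-empty poset always has at least one maximal element, so existence is immediate.

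For uniqueness, I would suppose toward a contradiction that $e'$ and $e''$ are two distinct maximal elements of $\fn{ext}(\downset{e})$. Both are extend events with target PCR $p$, so both belong to $E_p$ (using clause i of the definition of ``touches'' a PCR). By maximality, neither can precede the other under $\prec_p$, so they must be $\prec_p$-incomparable. Now I would invoke the extend-ordered hypothesis directly: in $(E_p, \prec_p)$, any two incomparable events must both be quote events. But $e'$ and $e''$ are extend events, not quote events, yielding the desired contradiction. Hence the maximal element is unique.

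The argument is entirely structural and I do not anticipate any genuine obstacle; the only minor point to be careful about is that maximality and incomparability are taken with respect to the restricted order $\prec_p$ (equivalently, with respect to the parent order $\prec$ restricted to $E_p$), so that the extend-ordered hypothesis can be applied verbatim. This parallels the earlier lemma's parenthetical clarification ``(taken in $E_o$)'' for adversary events, and once that convention is fixed the proof is a single short paragraph.
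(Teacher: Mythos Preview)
Your proposal is correct and is essentially the same argument as the paper's own proof: finiteness gives existence of a maximal element, and two distinct maximal extend events would be $\prec_p$-incomparable, contradicting the extend-ordered hypothesis. The paper additionally opens with the remark that $\fn{ext}(E_p)$ is partitioned by $\fn{ext}(\downset{e})$, $\{e\}$, and $\fn{ext}(\upset{e})$, but this is not needed for the uniqueness step and your version is otherwise identical in structure.
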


\begin{proof}
  Because $(E,\prec)$ is extend-ordered, $\fn{ext}(E_p)$ is
  partitioned by $\fn{ext}(\downset{e})$, $\{e\}$, and
  $\fn{ext}(\upset{e})$ for any $e\in E_p$. (The singleton $\{e\}$
  forms part of the partition exactly when $e$ is an extend event.)
  Suppose $\fn{ext}(\downset{e})$ is not empty. Since $E$ is finite,
  $\fn{ext}(\downset{e})$ has at least one maximal element. Suppose
  $e'$ and $e''$ are two distinct maximal elements. Thus they are
  $\prec_p$-incomparable. However, since $(E,\prec)$ is
  extend-ordered, either $e'\prec_p e''$ or $e''\prec_p e'$, yielding
  a contradiction. \qed
\end{proof}

This lemma allows us to unambiguously define the value in a PCR at any
event that touches the PCR. 

\begin{definition}[PCR Value]\label{def:pcr state}
  We define the \emph{value} in a PCR $p$ at event $e$ touching $p$ to
  be the following, where $\downset{e}$ is taken in $E_p$. 
\[\fn{val}(e,p) =
\left\{ \begin{array}{cl} \mathsf{rst} & :
    \fn{ext}(\downset{e})=\emptyset,
    e=\qt{n}{t_I}\\
    \#(v,\mathsf{rst}) & : \fn{ext}(\downset{e})=\emptyset,
    e=\ext{o}{v}{p}\\
    \fn{state}(e',p) & : e'=\fn{max}(\fn{ext}(\downset{e})), e=\qt{n}{t_I}\\
    \#(v,\fn{state}(e',p)) & : e'=\fn{max}(\fn{ext}(\downset{e})),
    e=\ext{o}{v}{p}
\end{array}
\right.\] 
When $e=\ext{o}{v}{p}$ we say $e$ is the event \emph{recording} the
value~$v$.
\end{definition}%

We next formalize the output of a quote event. Definition~\ref{def:pcr
  state} allows us compute all the relevant information that must be
included in a digital signature. Recall that, to ensure the signature
cannot be forged, we must assume the signing key is not available to
the adversary.

\begin{definition}[Quote Outputs]\label{def:outputs}
  Let $e=\qt{n}{t_I}$. Then its \emph{output} is
  $\fn{out}(e)=\sig{n,(t.i)_{i\in I},(v_i)_{i\in I}}{\fn{sk}(t)}$,
  where for each $i\in I$, $\fn{val}(e,t.i)=v_i$, and
  $\fn{sk}(t)\in\mathcal{K}$ (the set of atomic, non-public keys). We
  say a quote $Q$ \emph{indicates a corruption} iff some $v_i$
  contains a $v\in\mathcal{B}(o)$ for some $o$.
\end{definition}

\begin{definition}[Executions]\label{def:asys exec}
  Let $\asys$ be a target system.
  An \emph{execution} of $\asys$ is any adversary-ordered and
  extend-ordered poset $E$ for $\asys$ such that whenever $e$ has
  input $v$, then $v$ is derivable from the set
  $\mathcal{P}\cup\{\fn{out}(e')\mid e'\prec_E e\}$, i.e. the public
  terms together with the output of previous events.

  An execution $E$ \emph{produces a
    quote} $Q$ (written $E\in\admits{Q}$), iff $E$ contains a quote
  event with output $Q$.

\end{definition}
\fi


\ifabbrev%
\else%
\section{Bundling Evidence for Attestation}
\label{sec:bundling}
\fi%
\ifabbrev%
\subsection{Formalizing and Justifying a Bundling Strategy.}
We now present an overview of our main results about bundling
evidence. We provide the formal statements and discuss their
significance. The proofs of these results as well as supporting lemmas
can be found in Appendix~\ref{sec:attsys details}.
\else%

In this section we present several results that demonstrate some key
inferences an appraiser can make about an execution that produces a
given quote. We then formalize Strategy 3 from
Section~\ref{sec:bundling examples} for bundling evidence. Another
sequence of results demonstrates that, under certain assumptions about
the design of regular components, the guarantees of
Theorem~\ref{thm:recent or deep} are preserved for executions
producing quotes according to Strategy 3. In particular, if a
corrupted component $o$ avoids detection, then the adversary must
either have performed a recent corruption or a deep corruption
(relative to $o$).
\fi



\ifabbrev%
\else%
\subsection{Principles for TPM-based bundling.}
For the remainder of this section we fix an arbitrary attestation
system $\asys=(O,M,C,P,L)$. Our first lemma allows us to infer the
existence of some extend events in an execution.

\begin{lemma}\label{lem:ext exist}
  Let $e$ be a quote event in execution $E$ with output $Q$. For each
  PCR $p$ reported on by $Q$, and for each $v$ contained in
  $\fn{val}(e,p)$ there is some extend event $e_v\prec_E e$ recording
  $v$.
\end{lemma}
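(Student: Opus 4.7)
The plan is to induct on $n := |\fn{ext}(\downset{e})|$, the number of extend events preceding $e$ in the restriction $E_p$ (where, per the definition of $\fn{val}$, $\downset{e}$ is always taken within $E_p$). To make the induction go through uniformly, I would strengthen the statement slightly: for any event $e$ touching $p$, the value $\fn{val}(e, p)$ equals a hash chain $\sq(v_1, \ldots, v_k)$ in which each $v_j$ is recorded by some extend event $e_j$ with $e_j \preceq_E e$---with strict inequality when $e$ is a quote event. The lemma is then the specialization of this strengthened claim to quote events.

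The base case ($n = 0$) is immediate: by the definition of $\fn{val}$ at a quote event with no preceding extends, $\fn{val}(e, p) = \mathsf{rst}$, which contains no values, so the claim holds vacuously.

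For the inductive step, the preceding extend-ordering lemma supplies a unique maximal element $e' = \ext{o}{v_k}{p}$ of $\fn{ext}(\downset{e})$. Unfolding the definition of $\fn{val}$ twice---first for the quote event $e$ and then for the extend event $e'$---yields $\fn{val}(e, p) = \#(v_k, w)$, where either $w = \mathsf{rst}$ or $w = \fn{val}(e'', p)$ for $e''$ the unique maximal element of $\fn{ext}(\downset{e'})$. The topmost value $v_k$ is recorded by $e'$ itself, with $e' \prec_E e$ by construction. The remaining values contained in $w$ are handled by the inductive hypothesis applied to $e''$, for which $|\fn{ext}(\downset{e''})| = n - 1 < n$; each resulting extend event satisfies $e_j \preceq_E e''$, and since $e'' \prec_E e' \prec_E e$, transitivity gives $e_j \prec_E e$.

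The main organizational point---rather than a genuine obstacle---is recognizing that the induction must cover both quote and extend events uniformly, since the recursion in the definition of $\fn{val}$ at a quote event immediately delegates to $\fn{val}$ at the most recent extend event, which is itself an extend event. Once the statement is strengthened as above, the argument reduces to definitional unfolding supported by the uniqueness of the maximal extend event from the preceding lemma.
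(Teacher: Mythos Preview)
Your argument is correct and is, at bottom, the same idea the paper uses---namely that the recursive definition of $\fn{val}$ simply accumulates the values from prior extend events---but the paper dispatches the lemma in one sentence (``by definition, the values contained in a PCR are exactly those that were previously extended into it''), whereas you spell out the induction explicitly. One small slip: when you apply the inductive hypothesis to $e''$, the count $|\fn{ext}(\downset{e''})|$ is $n-2$, not $n-1$ (both $e'$ and $e''$ are removed), though this does not affect validity; also, your inductive step as written only treats the case where $e$ is a quote event, so to fully support the strengthened statement you should note the (simpler, single-unfolding) extend case as well.
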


\begin{proof}
  By definition, the values contained in a PCR are exactly those that
  were previously extended into it. Thus, since $\mathsf{ext}$ events
  are the only way to extend values into PCRs, there must be some
  event $e_v=\ext{o}{v}{p}$ with $e_v\prec_E e$. \qed
\end{proof}

\begin{lemma}\label{lem:inputs}
  Let $e\in E$ be an event with input parameter $v$. If
  $v\in\mathcal{N}$ or if $v$ is a signature using key
  $\fn{sk}(t)\in\mathcal{K}$, then there is a prior event $e'\prec_E
  e$ such that $\fn{out}(e')=v$.
\end{lemma}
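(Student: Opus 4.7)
The plan is to exploit the derivability constraint built into Definition~\ref{def:asys exec}, which forces the input $v$ of $e$ to lie in $\mathcal{T}_{\Sigma}(T)$ where $T = \mathcal{P}\cup\{\fn{out}(e'')\mid e''\prec_E e\}$. Since $\mathcal{P}$, $\mathcal{N}$, and $\mathcal{K}$ partition $V$, neither a nonce in $\mathcal{N}$ nor a private signing key in $\mathcal{K}$ lies in $\mathcal{P}$, so both hypotheses push $v$ (or its key) outside the public part of the derivation base.

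For the nonce case, I would argue directly that any atom appearing as an element of $\mathcal{T}_{\Sigma}(T)$ must itself be one of the base terms of $T$, because every constructor in $\Sigma$ (pairing, signing, hashing) yields a compound term. Since $v\in\mathcal{N}$ rules out $v\in\mathcal{P}$, the only remaining possibility is that $v = \fn{out}(e')$ for some $e'\prec_E e$, which is exactly the desired conclusion.

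For the signature case $v = \sig{m}{\fn{sk}(t)}$ with $\fn{sk}(t)\in\mathcal{K}$, I would induct on the derivation of $v$ in $\mathcal{T}_{\Sigma}(T)$. Either $v$ is already a base term of $T$, in which case $v\notin\mathcal{P}$ again forces $v = \fn{out}(e')$ for some prior $e'$; or $v$ is built by applying the signing constructor, which requires the atomic key $\fn{sk}(t)$ itself to be derivable from $T$. Reapplying the atom argument to $\fn{sk}(t)$ then reduces this subcase to the claim that $\fn{sk}(t)\in T$.

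The hard part will be ruling out this last possibility, but it is handled by a short case analysis on the shapes of outputs: measurement outputs lie in $\mval(o)\subseteq\mathcal{P}$, $\attstart$ events output nonces in $\mathcal{N}$, and quote outputs (Definition~\ref{def:outputs}) are signatures, hence compound terms. None of these belongs to $\mathcal{K}$, so $\fn{sk}(t)\notin T$, the constructive subcase for signatures is impossible, and we conclude $v = \fn{out}(e')$ for some $e'\prec_E e$.
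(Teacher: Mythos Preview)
Your proposal is correct and follows essentially the same argument as the paper's proof: both use the derivability constraint of Definition~\ref{def:asys exec}, handle the nonce case by atomicity and disjointness of $\mathcal{N}$ from $\mathcal{P}$, and handle the signature case by the dichotomy that either $v$ is already an output or else $\fn{sk}(t)$ would have to be derived, which is impossible since no event outputs a private key. Your final paragraph spells out the case analysis on output shapes slightly more explicitly than the paper does, but the reasoning is the same.
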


\begin{proof}
  Definition~\ref{def:asys exec} requires $v$ to be derivable from the
  public terms $\mathcal{P}$ and the output of previous messages. Call
  those outputs $\mathcal{O}$. 

  First suppose $v\in\mathcal{N}$. Since $v$ is atomic, the only way
  to derive it is if $v\in\mathcal{P}\cup\mathcal{O}$. Since
  $\mathcal{P}\cap\mathcal{N}=\emptyset$, $v\not\in\mathcal{P}$, hence
  $v\in\mathcal{O}$ as required.

  Now suppose $v$ is a signature using key
  $\fn{sk}(t)\in\mathcal{K}$. Then $v$ can be derived in two ways. The
  first is if $v\in\mathcal{P}\cup\mathcal{O}$. In this case, since
  $v\not\in\mathcal{P}$ it must be in $\mathcal{O}$ instead as
  required. The other way to derive $v$ is to construct it from the
  key $\fn{sk}(t)$ and the signed message, say $m$. That is, we must
  first derive $\fn{sk}(t)$. Arguing as above, the only way to derive
  $\fn{sk}(t)$ is to find it in $\mathcal{O}$, but there are no events
  that output such a term. \qed
\end{proof}

\begin{lemma}\label{lem:ext order}
  Let $E$ be an execution producing quote $Q$. Assume $v_i$ is
  contained before $v_j$ in PCR $p$ reported on by $Q$, and let $e_i$
  and $e_j$ be the events recording $v_i$ and $v_j$ respectively. Then
  $e_i\prec_E e_j$.
\end{lemma}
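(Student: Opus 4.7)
The plan is to exploit the extend-ordered hypothesis on $E$ together with the recursive definition of PCR values (Definition~\ref{def:pcr state}) to reconstruct the hash chain stored in $p$ from the linear sequence of prior extend events, so that positions in the chain correspond directly to the temporal order of extends.

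Let $e_Q$ denote the quote event with $\fn{out}(e_Q)=Q$. Because $E$ is extend-ordered, any two distinct extend events in $E_p$ are $\prec_E$-comparable, so the finite set of extend events preceding $e_Q$ in $E_p$ forms a chain $e^1 \prec_E e^2 \prec_E \cdots \prec_E e^m$, with each $e^k$ recording some value $u_k$. By a straightforward induction on $k$ using the two extend-event clauses of Definition~\ref{def:pcr state}, I would establish $\fn{val}(e^k,p) = \sq(u_1,\ldots,u_k)$, and then the quote clause gives $\fn{val}(e_Q,p)=\sq(u_1,\ldots,u_m)$.

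Since hash chains parse uniquely as terms in $\mathcal{T}_{\Sigma}(V)$, the hypothesis that $v_i$ is contained before $v_j$ in $\fn{val}(e_Q,p)$ means there are indices $a<b$ with $v_i = u_a$ and $v_j = u_b$; the extend events producing those two positions of the chain are exactly $e^a$ and $e^b$, whence $e_i = e^a \prec_E e^b = e_j$. The main subtlety is the potential ambiguity of the phrase ``the events recording $v_i$ and $v_j$'' when the same value occurs at multiple positions of the chain: in that case $v$ alone does not single out an extend event. I would resolve this by identifying each occurrence in the chain with the unique extend event $e^k$ at the corresponding position, a correspondence made well-defined precisely because the extend-ordered property linearly orders the extend events below $e_Q$ in $E_p$.
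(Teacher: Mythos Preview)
Your argument is correct and is essentially a detailed unfolding of the paper's own proof, which dismisses the lemma in one line as ``an immediate consequence of how PCR state evolves according to $\mathsf{ext}$ events.'' Your use of the extend-ordered property to linearize the extend events below $e_Q$ in $E_p$, followed by the inductive identification of $\fn{val}(e_Q,p)$ with $\sq(u_1,\ldots,u_m)$, is precisely the content that the paper leaves implicit; your remark on disambiguating repeated values by chain position is a reasonable clarification that the paper does not address.
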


\begin{proof}
  This is an immediate consequence of how PCR state evolves according
  to $\mathsf{ext}$ events. \qed
\end{proof}

\begin{corollary}\label{cor:quote order}
  Let $E$ be an execution producing quotes $Q$, and $Q'$ where $Q$
  reports on PCR $p$. Suppose $Q'$ is contained in $p$ before
  $v$. Then every event recording values contained in $Q'$ occurs
  before the event recording $v$.
\end{corollary}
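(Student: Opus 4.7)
The plan is to chain together the three preceding lemmas in order to walk backward from $v$ to the extend events that fed into the nested quote $Q'$. Since the statement relates extend events across two different quotes, the natural strategy is to first relate the events inside $Q$, then use the fact that $Q'$ itself had to be produced by an earlier quote event, and finally apply the extend-lemmas once more inside that earlier quote event.

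First I would argue about what happens inside $p$ for the outer quote $Q$. By hypothesis $Q'$ is contained before $v$ in $\fn{val}(e,p)$ for the quote event $e$ producing $Q$. Applying Lemma~\ref{lem:ext exist} to $Q$ yields an extend event $e_{Q'} = \ext{o}{Q'}{p}$ recording $Q'$, and an extend event $e_v$ recording $v$, both preceding $e$. Applying Lemma~\ref{lem:ext order} then gives $e_{Q'} \prec_E e_v$.

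Next I would move from the extend event $e_{Q'}$ to the quote event that actually produced the term $Q'$. The input parameter of $e_{Q'}$ is the term $Q'$, which is a signature under a key $\fn{sk}(t) \in \mathcal{K}$. Lemma~\ref{lem:inputs} therefore provides a prior event $e' \prec_E e_{Q'}$ with $\fn{out}(e') = Q'$; by the form of $Q'$, this $e'$ must be a quote event. Now for any value $w$ contained in $Q'$ (i.e.\ contained in some $\fn{val}(e', p')$ for a PCR $p'$ reported on by $Q'$), a second application of Lemma~\ref{lem:ext exist}, this time to $e'$, yields an extend event $e_w \prec_E e'$ recording $w$. Chaining the three order facts gives $e_w \prec_E e' \prec_E e_{Q'} \prec_E e_v$, which is precisely what the corollary asserts.

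I do not anticipate a genuine obstacle here; the argument is essentially a three-step bookkeeping exercise on top of the preceding lemmas. The only subtle point to state carefully is the appeal to Lemma~\ref{lem:inputs}: one must observe that $Q'$ is a signature under a non-public key $\fn{sk}(t)$ so that the lemma's hypothesis applies and the adversary cannot have fabricated $Q'$ as a derived public term. Once that is spelled out, transitivity of $\prec_E$ finishes the argument uniformly for every value $w$ contained in $Q'$.
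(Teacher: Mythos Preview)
Your proposal is correct and follows essentially the same approach as the paper's own proof: use Lemma~\ref{lem:ext order} to order $e_{Q'}\prec_E e_v$, invoke Lemma~\ref{lem:inputs} on the signed term $Q'$ to obtain a prior quote event outputting $Q'$, and then apply Lemma~\ref{lem:ext exist} to that quote event before chaining by transitivity. Your explicit remark that $Q'$ is a signature under $\fn{sk}(t)\in\mathcal{K}$, needed for Lemma~\ref{lem:inputs} to apply, is a good point to make precise.
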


\begin{proof}
  By Lemma~\ref{lem:ext order}, the event $e_{Q'}$ recording $Q'$ is
  before the event $e_v$ recording $v$. $Q'$ is an input to $e_{Q'}$
  satisfying the hypotheses of Lemma~\ref{lem:inputs}, hence there
  must be a prior quote event $e_q\prec_Ee_{Q'}$ with
  $\fn{out}(e_q)=Q'$. By Lemma~\ref{lem:ext exist} all events
  $e_{v_i}$ recording values $v_i$ contained in $Q'$ must occur before
  $e_q$. By the transitivity of $\prec_E$ we conclude
  $e_{v_i}\prec_Ee_v$ for each $v_i$. \qed
\end{proof}

\subsection{Formalizing and justifying a bundling strategy.}
\fi

\ifabbrev%
We begin by formalizing the tiered, nested quote structure of
Strategy~3 from Section~\ref{sec:bundling examples}. We define the
strategy indirectly by first defining a method for extracting a
measurement specification from a quote and then checking whether that
specification measures bottom-up.\\

\noindent
\textbf{Bundling Strategy.}\emph{ Let $\qts$ be a set of quotes. We
  describe how to create a measurement specification $S(\qts)$. For
  each $Q\in\qts$, and each $p$ that $Q$ reports on, and each
  $v\in\mval(o_2)$ contained in $p$, $S(\qts)$ contains an event
  $e_v=\rmeas{o_1}{o_2}$ where $M(o_1,o_2)$ and $L(o_1,p)$. Similarly,
  for each $n$ in the nonce field of some $Q\in\qts$, $S(\qts)$
  contains the event $\attstart(n)$. Let $S_Q$ denote the set of
  events derived in this way from $Q\in\qts$. Then
  $e\prec_{S(\qts)}e_v$ iff $Q$ is contained before $v$ and $e\in
  S_Q$. $\qts$ complies with the bundling strategy iff $S(\qts)$
  measures bottom-up.}\\

As we saw earlier, this strategy alone is not enough to guarantee to
an appraiser that the expected order of measurement actually
occurred. This is due to the fact that a quote can only provide direct
evidence for the order in which measurements were recorded, not the
order in which they were taken. Thus we make two assumptions about
executions of attestations systems that will allow us to reconstruct
the order of measurement events from the order of corresponding extend
events. 

\begin{assumption}\label{assm:prior meas}
  If $E$ contains an event $e=\ext{o}{v}{p}$ with $v\in\mval(t)$,
  where $o$ is regular at that event, then there is an event
  $e'=\rmeas{o}{t}$ such that $e'\prec_E e$. Furthermore, the most
  recent such event $e'$ satisfies $\fn{out}(e')=v$.
\end{assumption}

This assumption says that when extending measurement values regular
components only extend the value they most recently generated through
measurement. This might be satisfied in several ways. For example, a
system might ensure that measurement results are immediately extended
into a PCR, so there is no opportunity for the component or the result
to be corrupted between measurement event and extend
event. Alternatively, a component may take periodic measurements of
its targets only extending the results into a PCR when an attestation
is requested. Such an architecture provides more flexibility but would
require a mechanism to ensure that the component can reliably retrieve
the most recent result of measurement, and that it cannot be altered
in the meantime. This suggests that security-focused operating systems,
such as SELinux, which enable fine-grained control over the flow of
information should play an important part of the architecture of an
attestation system.

\begin{assumption}\label{assm:fresh meas}
  Suppose $E$ has events $e\prec_E e'$ where $e=\rmeas{o_2}{o_1}$ and
  $e'=\ext{o}{v}{p}$ where $v\in\mval(o_t)$, $o_1\in D^1(o_t)$. Then
  either 
  \begin{enumerate}
    \item $o$ is corrupt at $e'$, or
    \item there is some $e'' = \rmeas{o}{o_t}$ with $e\prec_E e''\prec_E
      e'$. 
  \end{enumerate}
\end{assumption}

This assumption is more complex. It says, roughly, for any target
$o_t$, whenever any of its measurers or their contexts are remeasured,
then $o_t$ should also be remeasured. This ensures that the most
recent measurements of higher layers are at least as recent as the
measurements of lower layers. Again, this might be achieved in a
variety of ways, but a natural way would be to use an architecture
in which one can specify and enforce fine-grained security policies
that can express this constraint. 

Assumption~\ref{assm:fresh meas} can only guarantee that an object is
remeasured whenever one of its dependencies is remeasured. It cannot
ensure that all orderings of $S(\qts)$ are preserved in
$\admits{\qts}$. For this reason we introduce the notion of the core
of a bottom-up specification. The \emph{core} of a bottom-up
specification $S$ is the result of removing any orderings between
measurement events $e_i\prec_S e_j$ whenever $e_i$ is not in the
support of $e_j$. That is, the core of $S$ ignores all orderings that
do not contribute to $S$ measuring bottom-up. 

We are now ready to show that these assumptions are sufficient to
ensure Strategy~3 for bundling evidence is a good one for TPM-based
attestations. The following theorem says that if an adversary wants to
convince an appraiser that measurements were taken in a given order
when in fact they were not, he must perform either a recent or deep
corruption. 

\begin{theorem}\label{thm:joint strategy}
  Let $E\in\admits{\qts}$ such that $S(\qts)$ measures bottom-up, and
  let $S'$ be its core. Suppose that $\qts$ detects no corruptions,
  and that $E$ satisfies Assumptions~\ref{assm:prior meas}
  and~\ref{assm:fresh meas}. Then one of the following holds:
  \begin{enumerate}
  \item $E\in\admits{S'}$,
  \item there is some $o_t\in O$ such that 
    \begin{enumerate}[a.]
    \item some $o_2\in D^2(o_t)$ is corrupted, or
    \item some $o_1\in D^1(o_t)$ is corrupted after being measured.
    \end{enumerate}
  \end{enumerate}
\end{theorem}

\begin{proof}
The proof is provided in Appendix~\ref{sec:attsys details}.
\end{proof}

Thus, the adversary cannot escape the consequences of
Theorem~\ref{thm:recent or deep}, because anything he does to avoid
the conditions of its hypothesis forces him to be subject to its
conclusion anyway!

\else%
Using these lemmas, we aim to understand the properties of an
execution $E$ if it produces a set of quotes constructed according to
Strategy 3 from Section~\ref{sec:bundling examples}. We first
formalize the tiered, nested structure of this bundling strategy.

\begin{definition}\label{def:extend bottom up}
  Let $e=\ext{o}{v}{p}$ be an extend event in execution $E$ such that
  $v\in\mval(o_t)$ for some $o_t\in O$. We say $e$ is
  \emph{well-supported} iff either
  \begin{enumerate}[i.]
  \item $o=\rtm$, or
  \item for every $o\in D^1(o_t)$ there is an extend event
    $e'\prec_Ee$ such that $e'=\ext{o'}{v'}{p'}$ with $v'\in\mval(o)$.
  \end{enumerate}
A collection of extend events $X$ \emph{extends bottom-up} iff each
$e\in X$ is well-supported.
\end{definition}

\noindent
\textbf{Bundling Strategy.}\emph{ Let $\qts$ be a set of quotes. We
  describe how to create a measurement specification $S(\qts)$. For
  each $Q\in\qts$, and each $p$ that $Q$ reports on, and each
  $v\in\mval(o_2)$ contained in $p$, $S(\qts)$ contains an event
  $e_v=\rmeas{o_1}{o_2}$ where $M(o_1,o_2)$ and $L(o_1,p)$. Similarly,
  for each $n$ in the nonce field of some $Q\in\qts$, $S(\qts)$
  contains the event $\attstart(n)$. Let $S_Q$ denote the set of
  events derived in this way from $Q\in\qts$. Then
  $e\prec_{S(\qts)}e_v$ iff $Q$ is contained before $v$ and $e\in
  S_Q$. $\qts$ complies with the bundling strategy iff $S(\qts)$
  measures bottom-up.}


\begin{proposition}\label{prop:extend bottom up}
  Suppose $E\in\admits{\qts}$ where $S(\qts)$ measures bottom-up. Then
  $E$ contains an extension substructure $X_\qts$ that extends
  bottom-up.
\end{proposition}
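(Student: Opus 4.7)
The plan is to construct $X_\qts$ by pulling back the events of $S(\qts)$ along the measurement-to-extend correspondence, and then transport the bottom-up property through Corollary~\ref{cor:quote order}. More specifically, for each quote $Q\in\qts$, each PCR $p$ reported on by $Q$, and each measurement value $v\in\mval(o_t)$ contained in the value of $p$ in $Q$, Lemma~\ref{lem:ext exist} yields an extend event $e_v=\ext{o_1}{v}{p}$ in $E$ (with $L(o_1,p)$, so necessarily $M(o_1,o_t)$). Define $X_\qts$ to be the collection of all such $e_v$'s. These are exactly the extend events corresponding to the measurement events $e_v^\ast=\rmeas{o_1}{o_t}$ that together make up $S(\qts)$.

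The work is then to verify that every $e_v\in X_\qts$ with $o_1\neq\rtm$ satisfies clause (ii) of Definition~\ref{def:extend bottom up}. Fix such an $e_v$ and let $e_v^\ast\in S(\qts)$ be its image in the specification. Since $S(\qts)$ measures bottom-up and $o_1\neq\rtm$, $e_v^\ast$ is well-supported, so for every $o\in D^1(o_t)$ there is a measurement event $f^\ast=\rmeas{o'}{o}\prec_{S(\qts)} e_v^\ast$. By the construction of $S(\qts)$, this $f^\ast$ arises from some value $v'\in\mval(o)$ contained in some PCR $p'$ of some quote $Q'\in\qts$, and the ordering $f^\ast\prec_{S(\qts)} e_v^\ast$ unpacks to: the quote term $Q'$ is contained in $p$ before $v$.

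The crucial step is then to convert this ``$Q'$ before $v$ in $p$'' fact into an ordering of the underlying extend events in $E$. This is exactly what Corollary~\ref{cor:quote order} provides: every event recording a value contained in $Q'$ occurs in $E$ before the event recording $v$, namely $e_v$. In particular the extend event $e'=\ext{o'}{v'}{p'}$ recording $v'$ satisfies $e'\prec_E e_v$, and since $v'\in\mval(o)$, this $e'$ witnesses clause (ii) for the component $o\in D^1(o_t)$. Doing this for each $o\in D^1(o_t)$ and each $e_v\in X_\qts$ shows $X_\qts$ extends bottom-up.

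The main obstacle is purely bookkeeping: keeping the map between measurement events in $S(\qts)$ and extend events in $E$ consistent across several quotes and PCRs, and being careful that the extend events witnessing support actually lie in $X_\qts$ (which they do, since $v'\in\mval(o)$ is contained in a PCR reported on by $Q'\in\qts$). Once these identifications are unambiguous, the argument is a direct transfer along Corollary~\ref{cor:quote order}, with no separate case analysis required.
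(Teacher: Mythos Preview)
Your proposal is correct and follows essentially the same approach as the paper: define $X_\qts$ via Lemma~\ref{lem:ext exist} as the extend events recording the measurement values reported in $\qts$, then use the bottom-up structure of $S(\qts)$ together with Corollary~\ref{cor:quote order} to transport the required orderings to $E$. Your write-up is in fact more careful than the paper's about the bookkeeping (tracking which value lives in which PCR of which quote, and why the supporting extend events lie in $X_\qts$), but the underlying argument is the same.
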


\begin{proof}
  Let $X_\qts$ be the subset of events of $E$ guaranteed by
  Lemma~\ref{lem:ext exist}. That is, $X_\qts$ consists of all the
  events $e=\ext{o}{v}{p}$ that record measurement values $v$ reported
  in $\qts$. For any such event $e$, if $o=\rtm$ then $e$ is
  well-supported by definition. Otherwise, since $S(\qts)$ measures
  bottom-up, Lemma~\ref{lem:ext exist} and Corollary~\ref{cor:quote
    order} ensure $X_\qts$ contain events $e'=\ext{o'}{v'}{p'}$ for
  every $o'\in D^1(o)$ where $e'\prec_E e$. Thus $e$ is also well
  supported in that case. \qed
\end{proof}

We make two key assumptions about executions of attestation
systems. 

\begin{assumption}\label{assm:prior meas}
  If $E$ contains an event $e=\ext{o}{v}{p}$ with $v\in\mval(t)$,
  where $o$ is regular at that event, then there is an event
  $e'=\rmeas{o}{t}$ such that $e'\prec_E e$. Furthermore, the most
  recent such event $e'$ satisfies $\fn{out}(e')=v$.
\end{assumption}

\begin{assumption}\label{assm:fresh meas}
  Suppose $E$ has events $e\prec_E e'$ where $e=\rmeas{o_2}{o_1}$ and
  $e'=\ext{o}{v}{p}$ where $v\in\mval(t)$, $o_1\in D^1(t)$. Then
  either 
  \begin{enumerate}
    \item $o$ is corrupt at $e'$, or
    \item there is some $e'' = \rmeas{o}{t}$ with $e\prec_E e''\prec_E
      e'$. 
  \end{enumerate}
\end{assumption}

The first assumption says that when extending measurement values
regular components only extend the value they most recently generated
through measurement. The second assumption is more complex. It is
meant to guarantee that measurements at higher layers are at least as
fresh as the measurements of the lower layers they depend on. Thus,
whenever a deeper component takes a measurement, there must be some
signal to the upper layer to tell those components to expire any
measurements they have taken. 

These two assumptions will not be validated in all attestation
systems. These are relatively subtle properties that can be expressed
in, say, SELinux policies, but would be difficult to implement in a
less constrained architecture based on a more commodity operating
system. We show that these assumptions are sufficient to ensure
Strategy~3 for bundling evidence is a good one, but they may not be
necessary. Furthermore, if a technology other than a TPM is used for
bundling, say Intel's SGX, then another set of assumptions may be more
appropriate.

\begin{theorem}\label{thm:rd-bundle}
  Let $E$ be an execution satisfying Assumptions~\ref{assm:prior meas}
  and~\ref{assm:fresh meas} that also contains an extension
  substructure $X$ that extends bottom-up. For each extend event
  $e=\ext{o_1}{v_t}{p_1}$, suppose that $v_t\in\mathcal{G}(o_t)$. Then
  for each such $e$, either
  \begin{enumerate}
  \item\label{cond:good} $e$ reflects a measurement event that is
    well-supported by measurement events reflected by the support of
    $e$.
  \item 
    \begin{enumerate}[a.]
    \item\label{cond:deep} some $o_2\in D^2(o_t)$ gets corrupted in
      $E$, or
    \item\label{cond:recent} some $o_1\in D^1(o_t)$ gets corrupted in
      $E$ after being measured.
    \end{enumerate}
  \end{enumerate}
\end{theorem}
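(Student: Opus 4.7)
My plan is to construct a ``reflection map'' sending each extend event $e \in X$ to a measurement event $\mu(e)$ whose existence is guaranteed by Assumption~\ref{assm:prior meas}, and then to transfer the bottom-up ordering of $X$ to the reflected measurements via Assumption~\ref{assm:fresh meas}. Whenever either assumption fails to produce a clean reflection, I will argue that the failure forces a corruption in $D^2(o_t)$ or a corruption in $D^1(o_t)$ occurring after the corrupted component was measured, yielding cases~2a and~2b respectively.

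Fix an extend event $e = \ext{o_1}{v_t}{p_1}$ in $X$ and split on whether $o_1$ is corrupt at $e$. If $o_1$ is corrupt at $e$, then since $o_1 \in M^{-1}(o_t) \subseteq D^1(o_t)$ I chase this corruption backwards: the well-supportedness of $e$ in $X$ supplies an earlier extend event $e_1 \in X$ recording some $v_1 \in \mval(o_1)\cap\mathcal{G}(o_1)$, and Assumption~\ref{assm:prior meas} applied to $e_1$ either exhibits a corrupt extender $o'' \in M^{-1}(o_1) \subseteq D^2(o_t)$ (case~2a) or extracts a measurement $\rmeas{o''}{o_1}$ whose good output, by Measurement Accuracy, forces $o_1$ to have been regular at that measurement (case~2b, since $o_1$ must have been recorrupted before $e$) or forces $o''$ or one of its context components to be corrupt (case~2a, using $\{o''\}\cup C^{-1}(o'') \subseteq D^2(o_t)$).

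The interesting case is when $o_1$ is regular at $e$. Here Assumption~\ref{assm:prior meas} yields a reflection $\mu(e) = \rmeas{o_1}{o_t} \prec_E e$ with $\fn{out}(\mu(e)) = v_t$. For each $o \in D^1(o_t)$ the well-supportedness of $e$ in $X$ supplies $e_o = \ext{o'}{v_o}{p_o} \in X$ with $v_o \in \mathcal{G}(o)$ and $e_o \prec_E e$; arguing as in the previous paragraph I either land in case~2a or obtain $\mu(e_o) = \rmeas{o'}{o} \prec_E e_o$ with $o$ regular at $\mu(e_o)$. To place $\mu(e_o)$ before $\mu(e)$, I invoke Assumption~\ref{assm:fresh meas} on the pair $\mu(e_o) \prec_E e$ (using $o \in D^1(o_t)$ and $v_t \in \mval(o_t)$): since $o_1$ is regular at $e$, the assumption delivers some $e'' = \rmeas{o_1}{o_t}$ with $\mu(e_o) \prec_E e'' \prec_E e$.

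The hard part will be reconciling $e''$ with $\mu(e)$. Assumption~\ref{assm:prior meas} picks out $\mu(e)$ as the \emph{most recent} measurement $\rmeas{o_1}{o_t}$ preceding $e$, so closing the argument requires $e'' \preceq_E \mu(e)$, which in turn requires the set of such measurements (with order inherited from $E$) to have a genuine maximum. I expect this to follow from adversary-ordering together with the observation that distinct $\rmeas{o_1}{o_t}$ events in a sensible execution must be separated by an adversary event touching $o_1$ or $o_t$---otherwise Assumption~\ref{assm:prior meas}'s appeal to ``the most recent'' would be ambiguous for the chosen witness. Once $\mu(e_o) \prec_E \mu(e)$ is established for every $o \in D^1(o_t)$, the event $\mu(e)$ is well-supported by the reflections of the support of $e$, which is exactly conclusion~1; any detour through the escape hatches above instead lands the execution in conclusion~2.
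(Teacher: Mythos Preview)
Your proposal follows essentially the same route as the paper's proof: reflect extend events to measurement events via Assumption~\ref{assm:prior meas}, transfer the bottom-up ordering via Assumption~\ref{assm:fresh meas}, and whenever a reflection step fails because some extender is corrupt, discharge the obligation through Measurement Accuracy into case~2a or~2b. The case organization differs only cosmetically: you split first on the corruption state of $o_1$ at $e$, whereas the paper first separates the base case $o_1=\rtm$ (which you omit---add it, since in that case the reflected event $\rmeas{\rtm}{o_t}$ is well-supported by definition and there is no support to chase), then cases on whether the supporting extenders $o_2^i$ are regular, and only afterwards on whether $o_1$ is regular at $e$.

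The ``hard part'' you flag---reconciling the witness $e''$ from Assumption~\ref{assm:fresh meas} with your chosen $\mu(e)$---is something the paper simply does not worry about. In its regular case it asserts directly that ``$e$ reflects a measurement event $e'=\rmeas{o_1}{o_t}$ such that $e'_i\prec_E e'$ for each $i$,'' implicitly treating $\mu(e)$ as the genuine maximum among all $\rmeas{o_1}{o_t}$ events preceding $e$, so that each Assumption~\ref{assm:fresh meas} witness $e''_i$ automatically satisfies $e''_i\preceq_E\mu(e)$. Your proposed mechanism---that adversary-ordering forces distinct $\rmeas{o_1}{o_t}$ events to be separated by an adversary event---is not supported by the definition (adversary-ordering constrains only adversary events, not measurement events, so two such measurements can be incomparable). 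The cleaner justification, and the one the paper tacitly relies on, is that Assumption~\ref{assm:prior meas}'s phrase ``the most recent such event'' already presupposes a unique maximum; you should invoke that reading rather than invent an ordering principle the framework does not provide.
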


\begin{proof}
  The proof considers an exhaustive list of cases, demonstrating that
  each one falls into one of Conditions~\ref{cond:good},
  \ref{cond:deep}, or~\ref{cond:recent}. The following diagram
  summarizes the proof by representing the case structure and
  indicating which condition each case satisfies.

\[
    \xymatrix@R=3ex@C=1em{
    \bullet\ar@{->}_1[d]\ar@{->}^2[r]&\bullet\ar@{->}_1[d]\ar@{->}^2[r]
    & \bullet\ar@{->}_1[d]\ar@{->}^2[r] &
    \bullet\ar@{->}_1[d]\ar@{->}^2[dr] & ~\\
    C1&C2a&C1&C2a&C2b
  }
 \]

  Consider any extend event $e=\ext{o_1}{v_t}{p_1}$ of $X$ extending a
  measurement value for some $o_t\in O$. The first case distinction is
  whether or not $o_1=\rtm$. 

  \textbf{Case 1:} Assume $o_1=\rtm$. Since $\rtm$ cannot be
  corrupted, it is regular at $e$, and by Assumption~\ref{assm:prior
    meas}, $e$ reflects the measurement event $\rmeas{\rtm}{o_t}$
  which is trivially well-supported, so Condition~\ref{cond:good} is
  satisfied. 

  \textbf{Case 2:} Assume $o_1\ne\rtm$. Since $X$ extends bottom-up,
  it has events $e_i=\ext{o_2^i}{v_2^i}{p_2^i}$ extending measurement
  values $v_2^i$ for every $o^i\in D^1(o_t)$, and for each~$i$,
  $e_i\prec_E e$. Now either some $o_2^i$ is corrupt at $e_i$ (Case
  2.1), or each $o_2^i$ is regular at $e_i$ (Case 2.2).

  \textbf{Case 2.1:} Assume some $o_2^i$ is corrupt at $e_i$. Then
  there must have been a prior corruption of $o_2^i\in D^2(o_t)$, and
  hence we are in Condition~\ref{cond:deep}.

  \textbf{Case 2.2:} Assume each $o_2^i$ is regular at $e_i$. Then
  Assumption~\ref{assm:prior meas} applies to each $e_i$, so each one
  reflects a measurement event $e'_i$. In this setting, either $o_1$
  is regular at $e$ (Case 2.2.1), or $o_1$ is corrupt at $e$ (Case
  2.2.2).

  \textbf{Case 2.2.1:} Assume $o_1$ is regular at $e$. Then since the
  events $e'_i$ together with $e$ satisfy the hypothesis of
  Assumption~\ref{assm:fresh meas}, we can conclude that $e$ reflects
  a measurement event $e'=\rmeas{o_1}{o_t}$ such that $e'_i\prec_E e'$
  for each~$i$. That is, $e'$ is well-supported by the $e'_i$ events
  which are reflected by the support of $e$, putting us in
  Condition~\ref{cond:good}.

  \textbf{Case 2.2.2:} Assume $o_1$ is corrupt at $e$. Since $o_1\in
  D^1(o_t)$ one of the $e'_i$ is a measurement event of $o_1$ with
  output $v_1\in\mathcal{G}(o_1)$ since $X$ only extends measurement
  values that do not indicate corruption. Call this event $e'_*$. The
  final case distinction is whether $o_1$ is corrupt at this event
  $e'_*$ (Case 2.2.2.1) or regular at $e'_*$ (Case 2.2.2.2).

  \textbf{Case 2.2.2.1:} Assume $o_1$ is corrupt at $e'_*$. Since the
  measurement outputs a good value, some element $o_2\in
  D^1(o_1)\subseteq D^2(o_t)$ is corrupt at $e'_*$. This satisfies
  Condition~\ref{cond:deep}.

  \textbf{Case 2.2.2.2:} Assume $o_1$ is regular at $e'_*$. By the
  assumption of Case 2.2.2, $o_1$ is corrupt at $e$ with $e'_*\prec_E
  e$. Thus there must be an intervening corruption event for
  $o_1$. Since $e'_*$ is a measurement event of $o_1$, this satisfies
  Condition~\ref{cond:recent}. \qed
\end{proof}

Assumption~\ref{assm:fresh meas} can only guarantee that an object is
remeasured whenever one of its dependencies is remeasured. It cannot
ensure that all orderings of $S(\qts)$ are preserved in
$\admits{\qts}$. For this reason we introduce the notion of the core
of a bottom-up specification. The \emph{core} of a bottom-up
specification $S$ is the result of removing any orderings between
measurement events $e_i\prec_S e_j$ whenever $e_i$ is not in the
support of $e_j$. That is, the core of $S$ ignores all orderings that
do not contribute to $S$ measuring bottom-up. 

\begin{theorem}\label{thm:joint strategy}
  Let $E\in\admits{\qts}$ such that $S(\qts)$ measures bottom-up, and
  let $S'$ be its core. Suppose that $\qts$ detects no corruptions,
  and that $E$ satisfies Assumptions~\ref{assm:prior meas}
  and~\ref{assm:fresh meas}. Then one of the following holds:
  \begin{enumerate}
  \item $E\in\admits{S'}$,
  \item there is some $o_t\in O$ such that 
    \begin{enumerate}[a.]
    \item some $o_2\in D^2(o_t)$ is corrupted, or
    \item some $o_1\in D^1(o_t)$ is corrupted after being measured.
    \end{enumerate}
  \end{enumerate}
\end{theorem}

\begin{proof}
  By Proposition~\ref{prop:extend bottom up}, $E$ contains a substructure
  $X_\qts$ of extend events that extends bottom-up. Thus by
  Theorem~\ref{thm:rd-bundle}, Conditions~2a and~2b are
  possibilities. So suppose instead that $E$ satisfies
  Condition~\ref{cond:good} of Theorem~\ref{thm:rd-bundle}. We
  must show that $E\in\admits{S'}$. In particular, we construct
  $\alpha:S'\to E$ and show that it is label- and
  order-preserving. 

  Consider the measurement events $e_i^s$ of $S'$. By construction,
  each one comes from some measurement value $v_i$ contained in
  $\qts$. Similarly, the well-supported measurement events $e_i^m$ of
  $E$ guaranteed by Theorem~\ref{thm:rd-bundle} are reflected by
  extend events $e_i$ of $E$ which are, in turn, those events that
  record each $v_i$ in $\qts$. We let $\alpha(e_i^s)=e_i^m$ for
  each~$i$.

  To see that $\alpha$ is label-preserving, consider first the label
  of $e_i^s$. It corresponds to a measurement value $v_i$ contained in
  some $p_i$ of $\qts$. So $e_i^s$ is labeled $\rmeas{o}{o'}$ where
  $M(o,o')$, $v_i\in\mval(o')$, and $L(o,p_i)$. The event $e_i^m$ also
  corresponds to the same $v_i$. Lemma~\ref{lem:ext exist} ensures
  that $e_i=\ext{o}{v}{p_i}$ with $L(o,p_i)$, and so the measurement
  event it reflects is $e_i^m=\rmeas{o}{o'}$ with $M(o,o')$ and
  $v_i\in\mval(o')$. Thus $e_i^s$ and $e_i^m$ have the same label. 

  We now show that if $e_i^s\prec_{S(\qts)}e_j^s$ then $e_i^m\prec_E
  e_j^m$. The former ordering exists in $S'$ because some quote
  $Q\in\qts$ is contained in $p_j$ before $v_j$ and $v_i$ is contained
  in $Q$, and because $e_i^s$ is in the support of $e_j^s$. By
  Corollary~\ref{cor:quote order} $e_i\prec_E e_j$ and $e_i$ is in the
  support of $e_j$ and therefore Theorem~\ref{thm:rd-bundle}
  ensures that the measurements they reflect are also ordered, i.e.
  $e_i^m\prec_E e_j^m$. 

  Finally, consider any events $e=\attstart(n)$ in $S'$. They come
  from nonces $n$ found as inputs to quotes $Q\in\qts$. By
  Lemma~\ref{lem:inputs}, $E$ also has a corresponding event $e^*$
  with $\fn{out}(e^*)=n$. Since $\attstart$ events are the only ones
  with output of the right kind, $e^*=\attstart(n)$ as well. Thus we
  can extend $\alpha$ by mapping each such $e$ to the corresponding
  $e^*$. The rules for $S(\qts)$ say that $e\prec_{S(\qts)}e'$ only
  when $Q$ has $n$ in the nonce field, and $Q$ occurs before the value
  recorded by $e'$. In $E$, $e^*$ precedes the event producing $Q$ (by
  Lemma~\ref{lem:inputs}) which in turn precedes $e'$ by
  Lemmas~\ref{lem:inputs} and~\ref{lem:ext order}. Thus the orderings
  in $S(\qts)$ involving $\attstart$ events are also preserved by
  $\alpha$. \qed
\end{proof}

\fi

\section{Conclusion}
\label{sec:conclusion}
\ifabbrev%
In this paper we have developed a formalism for reasoning about
layered attestations. Within this framework we have justified the
intuition (pervasive in the literature on measurement and attestation)
that it is important to measure a layered system from the bottom up
(Theorem~\ref{thm:recent or deep}). We also proposed a strategy for
using TPMs to bundle evidence.  If used in conjunction with bottom-up
measurement, we can guarantee an appraiser that if an adversary has
corrupted a component and managed to avoid detection, then it must
have performed a recent or deep corruption (Theorem~\ref{thm:joint
  strategy}).  \else%
In this paper we have developed a formalism for reasoning about
layered attestations. Within this framework we have justified the
intuition (pervasive in the literature on measurement and attestation)
that it is important to measure a layered system from the bottom up
(Theorem~\ref{thm:recent or deep}). We also proposed and justified a
strategy for using TPMs to bundle evidence
(Theorem~\ref{thm:rd-bundle}). If used in conjunction, these two
results guarantee an appraiser that if an adversary has corrupted a
component and managed to avoid detection, then it must have performed
a recent or deep corruption (Theorem~\ref{thm:joint strategy}).  \fi%

Although we used our model to justify the proposed general and
reusable strategies for layered attestations, we believe our model has
a wider applicability. It admits a natural graphical interpretation
that is straightforward to understand and interpret. Future work to
develop reasoning methods within the model could lead to more
automated analysis of attestation systems. We believe a tool that
leverages automated reasoning and the graphical interpretation would
be a useful asset. 

For the present work we made several simplifying assumptions. For
instance, we assumed that if measurers (or their supporting
components) are corrupted, then they can always forge the results of
measurement. This conservative, worst-case view does not account for
a situation in which, say, even if the OS kernel is corrupted, it may
still be hard to forge the results of a virus scan. Conversely, we
also assumed that uncorrupted measurers can always detect
corruptions. This is certainly not true in most systems. Adapting the
model to account for probabilities of detection would be an
interesting line of research that would make the model applicable to a
wider class of systems.

Another issue of layered attestations that we did not address here,
is the question of what to do when the system components fall into
different administrative domains. This would be typical of a cloud
architecture in which the lower layers are administered by the cloud
provider, but the customers may provide their own set of measurement
capabilities as well. A remote appraiser must be able to negotiate an
attestation according several policies. Our model might be extended to
account for the complexities that arise. 

Finally, we chose to study the use of TPMs for bundling evidence. We
believe other approaches leveraging timing-based techniques or other
emerging technologies including hardware-supported trusted execution
environments such as Intel's new SGX instruction set could be captured
similarly. This would allow us to formally demonstrate the security
advantages of one approach over another, or understand how to build
attestation systems that leverage several technologies.

\section*{Acknowledgments}
I would like to thank Pete Loscocco for suggesting and guiding the
direction of this research. Many thanks also to Perry Alexander and
Joshua Guttman for their valuable feedback on earlier versions of this
work.  Finally, thanks also to Sarah Helble and Aaron Pendergrass for
lively discussions about measurement and attestation systems.


\bibliography{semantics}
\bibliographystyle{plain}

\end{document}
